\def\BibTeX{{\rm B\kern-.05em{\sc i\kern-.025em b}\kern-.08em
    T\kern-.1667em\lower.7ex\hbox{E}\kern-.125emX}}
\theoremstyle{definition}
\newtheorem{theorem}{Theorem}
\newtheorem{corollary}{Corollary}
\newtheorem{definition}{Definition}
\newtheorem{example}{Example}
\newtheorem{lemma}{Lemma}
\newcommand{\black}{\color{black}}
\title{\fontsize{17pt}{1cm}\selectfont
{A Characterization of Egalitarian and Proportional Sharing Principles: \\An Efficient Extension Operator Approach}\footnote{We thank Sylvain B\'eal, Susumu Cato, Sylvain Ferri\`eres,  St\'ephane Gonzalez,  Toru Hokari,  Nobuo Koida,  David Lowing,  Juan D. Moreno-Ternero,  Shintaro Miura,  Florian Navarro,  Hendrik Rommeswinkel,  Philippe Solal, Kevin Techker, Takashi Ui, and participants in EAGT 2024, Summer workshop 2024, RISS workshop 2025 in Kansai University, Prof. Koichi Tadenuma retirement conference, Universit\'e Marie et Louis Pasteur, Hitotsubashi University, Kwansei Gakuin Univeristy, SING 2025, University of Saint-Etienne, and Networks and Games seminars at CES for helpful comments. Nakada acknowledges  the financial support from Japan Society for the Promotion of Science KAKENHI: No.19K13651, 20KK0036, and 25K16606.
Koriyama acknowledges the financial support from Investissements d'Avenir, ANR-11-IDEX-0003/Labex Ecodec/ANR-11-LABX-0047.
All remaining errors are our own.}}
\author{
Yukihiko Funaki\thanks{School of Political Science and Economics, Waseda University. E-mail: funaki@waseda.jp}
\and
Yukio Koriyama\thanks{CREST, Ecole Polytechnique, Institut Polytechnique de Paris. E-mail: yukio.koriyama@polytechnique.edu}
\and
Satoshi Nakada\thanks{School of Management, Department of Business Economics, Tokyo University of Science. E-mail: snakada@rs.tus.ac.jp}
}
\date{\today}
\begin{document}

\maketitle

\begin{abstract}
\fontsize{10pt}{10.5pt}\selectfont
{
Some well-known solutions for cooperative games with transferable utility (TU-games), such as the Banzhaf value, the Myerson value, and the Aumann--Dr\`eze value, fail to satisfy efficiency.
Despite their desirable normative properties, this inefficiency motivates the search for a systematic method to restore efficiency while preserving their underlying normative structure.
By shifting attention from individual solutions to their meta-level transformations, this paper introduces \textit{efficient extension operators} as a general framework for restoring efficiency from arbitrary underlying solutions.
We consider novel axioms for those operators and characterize the egalitarian surplus sharing method and the proportional sharing method in a unified manner.
As applications, we demonstrate the generality of our method by developing an \textit{efficient-fair extension} of solutions for TU games with communication networks, as well as a variant for TU games with coalition structures.
\\
\newline\noindent\textit{JEL classification}: C71,D61.
\newline\noindent\textit{Keywords}: Efficient extension; Equal surplus sharing; Proportional sharing; Axiomatization; TU-games.
}
\end{abstract}

\section{Introduction}
\subsection{Motivation and overview}
Efficiency is a fundamental requirement in a wide range of resource allocation problems, yet it often conflicts with other desirable properties. 
For instance, efficiency and equity axioms such as envy-freeness are generally incompatible in exchange economies \citep{tadenuma1996};\footnote{\cite{thomson1990} shows that envy-freeness and another equity criterion called egalitarian equivalence are incompatible in general.} efficiency and stability cannot always be jointly achieved in school choice problems \citep{galeshapley1962, abdulkadirouglusonmez2003}; and efficiency and strategy-proofness are often at odds in general social choice environments \citep{gibbard1973, satterthwaite1975, mullersatterthwaite1977}.

Similar trade-offs arise in cooperative games with transferable utility (TU-games), where allocation rules assign cardinal payoffs to players based on the worth of coalitions. 
These trade-offs are particularly important because TU-games provide a foundational framework for analyzing distributive justice, welfare comparisons, and collective decision-making in multi-agent environments.
Here, efficiency means that the total payoff equals the worth of the grand coalition. 
While canonical solutions such as the Shapley value \citep{Shapley1953}, the CIS value \citep{DriessenFunaki1991}, and the egalitarian and proportional sharing values satisfy efficiency, several widely discussed rules do not. 
For example, the Banzhaf index \citep{banzhaf1964} fails to allocate the entire value of the grand coalition, complicating the interpretation of comparative statics for voting power: an increase in the index does not necessarily correspond to an increase in actual voting power.
Similarly, the Myerson value \citep{myerson1977graphs}, though normatively appealing for incorporating communication structures as an extension of the Shapley value, can also violate efficiency.

In many applications, inefficient but normatively appealing rules are used as benchmarks.
A natural approach, therefore, is to restore efficiency while preserving as many of their original desirable properties as possible. 
This idea is used to resolve the aforementioned trade-offs, for instance, equity-adjusted efficient allocations in exchange economies \citep{suzumura1983, tadenuma2002,tadenuma2005,houytadenuma2009} or efficiency-enhancing modifications of the deferred acceptance algorithm in matching markets \citep{erdilergin2008, kesten2010, douganehlars2021, shirakawa2025}.

In TU-games, two main procedures are commonly used to restore efficiency from a given allocation rule: the equal surplus sharing (ESS) and proportional sharing (PS) methods.
ESS redistributes the surplus or deficit equally among players, while PS does so proportionally to the original allocation.
The CIS value is a special case of ESS, and these procedures have been widely applied—for instance, ESS to the Myerson value \citep{vandenbrinketal2012, bealetal2015, bealetal2016, bealetal2018} and PS to the Banzhaf value \citep{vandenBrinkvanderLaan1998SCW}.

If there is complete agreement on which solution is plausible, one can apply a specific method to the underlying solution in order to restore efficiency, as is typically assumed in \citet{funakikoriyama2025}.
However, there may exist multiple solutions that are plausible candidates. In particular, the legitimacy of the individual share $f$ is not usually determined uniquely.
To address the question of how efficiency should be restored in the presence of multiple candidates, it is therefore necessary to develop new operational methods that transform a wide class of $f$ into efficient ones.

To answer the question, we study how to design efficiency-restoring operators that can be applied to any rule.
Specifically, we introduce the novel notion of an \textit{efficient extension operator}, a mapping from allocation rules to possibly another efficient allocation rule.
This formulation allows us to shift the analytical focus from specific allocation rules to the operators that transform them, enabling a general axiomatic foundation for efficiency-restoring procedures such as the ESS and the PS.

Our new approach naturally leads to a distinction between two types of axioms: \textit{universal axioms} and \textit{situational axioms}.
Universal axioms are meta-level constraints on efficiency-restoring transformations: they govern how an operator maps input allocation rules into output rules, without reference to the specifics of the underlying game. By contrast, situational axioms are model-dependent requirements imposed on the resulting allocation rules and reflect the structural features of the underlying cooperative environment.

We introduce two \textit{universal} axioms imposed directly on operators: \emph{equal treatment} and \emph{equality for equal surplus}.
Equal treatment requires that players treated identically by the input rule remain so after the extension.
Equality for equal surplus ensures that if a player receives the same amount in two solutions, both in
terms of allocation and surplus, then the player should receive the same amount after the efficient
extension. 
These universal axioms substantially restrict the admissible forms of operators.
When combined with \textit{situational} axioms, namely, conditions that depend on the underlying game and are based on the monotonicity principles developed by \citet{funakikoriyama2025}, we provide unified characterizations of both the ESS and the PS methods as efficient extension operators.
Our results show that, under mild and transparent operator-level axioms, the universe of efficiency-restoring transformations boils down to only two canonical forms.

Our approach identifies the common mathematical structure underlying the ESS and PS methods through axioms formulated at the operator level. 
More importantly, unlike \citet{funakikoriyama2025}, our analysis does not presuppose a uniquely specified benchmark solution. 
Instead, we allow for a broad class of plausible benchmark solutions and characterize a single operator acting on this entire class.
A key conceptual distinction from the standard axiomatic approach is therefore that our object of study is an operator rather than an individual allocation rule. 
Consequently, our axioms constrain not only each transformed solution individually, but also the consistency of the transformation across different benchmark solutions. 
This operator-level viewpoint provides a unified foundation for efficiency-restoring procedures that cannot be obtained from benchmark-specific characterizations alone.
For this reason, our method is \textit{not} a direct technical extension of \citet{funakikoriyama2025}, but rather a complementary approach that generalizes the scope of efficient extension by shifting the focus from solutions to operators.

Next, to demonstrate the generality of our approach, we extend our analysis beyond standard TU-games to more enriched frameworks that incorporate additional structure. 
Specifically, we consider two prominent extended frameworks: TU-games with communication structures, as introduced by \cite{myerson1977graphs}, and TU-games with coalition structures, as studied by \cite{aumanndreze1974} and \cite{owen1977}.
These models allow for more realistic representations of interaction patterns among players--either through communication networks or through fixed coalition partitions.

In both settings, the fairness criterion of \cite{myerson1977graphs}, or its variants, plays a central role. 
To accommodate this consideration, we introduce the notion of an \textit{efficient-fair extension operator}: an operator that, in addition to ensuring efficiency of the output rule and preserving it when the input is already efficient, also maintains the fairness property of the input rule. 
This concept captures the idea of modifying an allocation rule to achieve efficiency without sacrificing fairness.
We show that our axiomatic approach to efficient extensions extends naturally to these enriched settings to provide axiomatic foundations for the ESS operator in communication games and coalition structure games.

\subsection{Related literature}

The idea of efficiently extending a baseline allocation rule in TU-games has been widely studied.
A classic example is the proportional normalization of the Banzhaf value.
The study most closely related to ours is \cite{funakikoriyama2025}, who consider a \textit{fixed} solution $f$ as a benchmark and analyze its ESS and PS extensions through novel monotonicity axioms. 
In our framework, their analysis corresponds to the case where the domain of the input rule is a singleton. 
Moreover, their technique applies only when the input rule is symmetric.
Hence, our result can be viewed as a generalization of their result by treating the operator as acting on the space of a broader class of allocation rules.

In the context of communication games, \cite{vandenbrinketal2012} introduced an efficient egalitarian extension of the Myerson value and characterized it axiomatically. 
\citet{bealetal2015, bealetal2016, bealetal2018} developed the concept of efficient fair extensions, identifying unique solutions consistent with fairness and efficiency on connected networks. 
\cite{shanetal2019} propose a proportional extension of the Myerson value that, however, sacrifices fairness. 
Similarly, \cite{huetal2019} study an ESS-type efficient extension of the Aumann--Dr\`eze value.
All of these papers adopt the standard axiomatic approach, focusing on the properties of specific extended allocation rules. 
Our notion of extension is fundamentally different from the extension framework of \citet{bealetal2015, bealetal2016, bealetal2018} and the aforementioned studies.
They investigate how a fixed solution (e.g., Myerson value in a connected network) can be extended to a richer class of games while preserving its defining properties in the small domain.
Note that this is a standard axiomatic analysis of the solutions of the games with communication networks.
In contrast, our objective is to transform the solution itself to recover efficiency properties by using axioms on the mapping of solution spaces.
Our contribution lies in treating efficient extensions as operators acting on the space of allocation rules and providing an operator-level axiomatization of such transformations.

From a technical perspective, our analysis builds upon the redistribution framework of \citet{Casajus2015}, who characterizes simple lump-sum tax redistribution rules by efficiency, monotonicity, and symmetry. 
His model is mathematically equivalent to additive TU-games, where the resulting allocation coincides with the egalitarian Shapley value \citep{Joosten1996thesis,vandenBrinketal2013,CasajusHuettner2014JET}. 
Although our proof eventually reduces to a redistribution problem of this type, the two settings differ fundamentally. 
Casajus studies redistribution rules acting directly on payoff vectors, whereas we characterize operators acting on allocation rules. 
Establishing this reduction constitutes the main additional technical step of our proof and allows his methodology to be extended to substantially richer cooperative game environments.

The rest of the paper is organized as follows.
In Section \ref{sec_model}, we provide the basic model.
In Section \ref{sec_main}, we introduce our main concept and result.
We extend our result to communication games in Section \ref{sec communication game} and coalition structure games in Section \ref{sec_coalition}.
We discuss several extensions in Section \ref{sec_conclusingremarks} and conclude the paper in Section \ref{sec_conclude}.
Omitted proofs are relegated to the Appendices.

\section{Preliminaries}\label{sec_model}

Let $N=\{1,\ldots,n\}$ denote the set of players.
Given $N$, a cooperative game with transferable utility (TU-game) is a pair $(N,v)$, where $v$ is a characteristic function $v:2^N\to\mathbb{R}$ with $v(\emptyset)=0$.
A coalition is a nonempty subset $S\subseteq N$.
Players $i,j\in N$ are said to be \emph{symmetric} in $v$ if 
$v(S\cup\{i\})-v(S)=v(S\cup\{j\})-v(S)$ for every $S\subseteq N\setminus\{i,j\}$.
For any permutation $\pi\in\Pi$ over $N$, the game $\pi v$ is defined by $\pi v(\pi S)=v(S)$ for all $S\subseteq N$, where $\pi S=\{\pi(i)\mid i\in S\}$.
Let $\mathcal{V}^N$ denote the set of all TU-games with player set $N$, and define
\[
\mathcal{V}^N_+=\{v\in\mathcal{V}^N\mid \sum_{k\in N}v(\{k\})>0\},
\]
the subset of games in which the sum of individual worths is strictly positive.

A solution is a function $\varphi:\mathcal{V}^N\to\mathbb{R}^n$ that assigns a payoff vector to each game.
Let $\mathcal{F}$ denote the set of all solutions, and define
\[
\mathcal{F}_+=\{f\in\mathcal{F}\mid f:\mathcal{V}^N_+\to\mathbb{R}^n,\ \sum_{k\in N}f_k(v)>0\text{ for all }v\in\mathcal{V}^N_+\}.
\]
For any $\varphi,\varphi'\in\mathcal{F}$ and $i\in N$, we write $\varphi_i=\varphi'_i$ if $\varphi_i(v)=\varphi'_i(v)$ for all $v\in\mathcal{V}^N$.

We now recall several standard axioms for solutions.\footnote{We use $\varphi$ as a generic element of $\mathcal{F}$, but later use $f$ to denote another solution serving as a benchmark.}

\begin{itemize}
\item[] \textbf{Efficiency (E):}
For all $v\in\mathcal{V}^N$, $\sum_{i\in N}\varphi_i(v)=v(N)$.

\item[] \textbf{Anonymity (A):}
For all $v\in\mathcal{V}^N$ and $\pi\in\Pi$, $\varphi_i(v)=\varphi_{\pi(i)}(\pi v)$.\footnote{This axiom is also referred to as \textit{symmetry} in the literature.}

\item[] \textbf{Equal Treatment (ET):}
For all $v\in\mathcal{V}^N$ and $i,j\in N$, if $i$ and $j$ are symmetric in $v$, then $\varphi_i(v)=\varphi_j(v)$.
\end{itemize}

Typical efficient solutions include the Shapley value \citep{Shapley1953}, the equal surplus sharing (ESS) value \citep{DriessenFunaki1991}, and the proportional sharing (PS) value, defined respectively by, for any $v\in\mathcal{V}^N$ (resp.\ $v\in\mathcal{V}^N_+$) and $i\in N$,
\[
Sh_i(v)=\sum_{S\subseteq N: i\in S} \frac{(|S|-1)!(n-|S|)!}{n!}\big(v(S)-v(S\setminus\{i\})\big),
\]
\[
ESS_i(v)=v(\{i\})+\frac{1}{n}\big(v(N)-\sum_{k\in N}v(\{k\})\big),
\]
\[
PS_i(v)=\frac{v(\{i\})}{\sum_{k\in N}v(\{k\})}v(N).
\]

\cite{funakikoriyama2025} generalize both the ESS and PS values by allowing a general benchmark solution $f\in\mathcal{F}$.\footnote{The ESS value is also known as the center of imputation set (CIS) value. \cite{van2009axiomatizations} introduce another class of generalizations of the ESS value.}
Given such a benchmark $f$, the \emph{egalitarian surplus sharing value with respect to $f$} (the $f$-ESS value) is defined, for all $v\in\mathcal{V}^N$ and $i\in N$, by
\[
\varphi_i(v)=ESS_i(f)(v)=f_i(v)+\frac{1}{n}\big(v(N)-\sum_{k\in N}f_k(v)\big).
\]
The stand-alone value $f_i(v)=v(\{i\})$ corresponds to the standard ESS value.\footnote{Two-step allocation rules of this kind have also been studied in general rationing problems by \cite{hougaardetal2012,hougaardetal2013SCW,hougaardetal2013AOR}; see also \cite{timonerizquierdo2016} and \cite{harless2017}.}

Analogously, \cite{funakikoriyama2025} define the \emph{proportional sharing value with respect to $f$} (the $f$-PS value) for $f\in\mathcal{F}_+$, $v\in\mathcal{V}^N_+$, and $i\in N$:
\[
\varphi_i(v)=PS_i(f)(v)=f_i(v)+\frac{f_i(v)}{\sum_{k\in N}f_k(v)}\big(v(N)-\sum_{k\in N}f_k(v)\big)
=\frac{f_i(v)}{\sum_{k\in N}f_k(v)}v(N).
\]

They introduce the following axioms, which, together with (E) and (ET), characterize the $f$-ESS and $f$-PS values for a given $f$ satisfying additional assumptions.

\begin{itemize}
\item[] \textbf{$f$-Individualistic Property for Equal Surplus ($f$-IES):}  
For all $v,w\in\mathcal{V}^N$ and $i\in N$, if
\[
v(N)-\sum_{k\in N}f_k(v)=w(N)-\sum_{k\in N}f_k(w)
\quad\text{and}\quad f_i(v)=f_i(w),
\]
then $\varphi_i(v)=\varphi_i(w)$.

\item[] \textbf{$f$-Individualistic Property for Equal Ratio ($f$-IER):}  
For all $v,w\in\mathcal{V}^N_+$ and $i\in N$, if
\[
\frac{v(N)}{\sum_{k\in N}f_k(v)}=\frac{w(N)}{\sum_{k\in N}f_k(w)}
\quad\text{and}\quad f_i(v)=f_i(w),
\]
then $\varphi_i(v)=\varphi_i(w)$.
\end{itemize}

\begin{theorem}[\citealt{funakikoriyama2025}]\label{f-ess/pd exogenous}
Suppose $f\in\mathcal{F}$ (resp.\ $f\in\mathcal{F}_+$) satisfies the following assumptions:

\begin{itemize}
\item[(1)] $f$ satisfies (A).

\item[(2)] For any $v\in\mathcal{V}^N$ (resp.\ $v\in\mathcal{V}^N_+$), there exists $c\in\mathbb{R}$ such that for any $x\in\mathbb{R}^n$ whose $i$th component $x_i$ equals either $c$ or $f_i(v)$ for all $i\in N$, there exists $w\in\mathcal{V}^N$ with $f(w)=x$.  
Moreover, if $x_i=c$ for all $i\in N$, then $w$ is symmetric.

\item[(3)] For any $v,w\in\mathcal{V}^N$ (resp.\ $v,w\in\mathcal{V}^N_+$), if $v(S)=w(S)$ for all $S\subsetneq N$, then $f(v)=f(w)$.
\end{itemize}

Then a solution $\varphi$ satisfies (E), (ET), and ($f$-IES) (resp.\ $f$-IER) if and only if it is the $f$-ESS value (resp.\ $f$-PS value).
\end{theorem}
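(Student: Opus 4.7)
The plan is to dispatch the ``if'' direction by direct verification and then reduce the ``only if'' direction to a symmetric benchmark game on which (E) and (ET) pin down the allocation, with ($f$-IES) or ($f$-IER) serving as a bridge between that benchmark and the game at hand.

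The ``if'' direction is routine. Both $f$-ESS and $f$-PS are efficient by construction, and Assumption (1) gives $f_i(v)=f_j(v)$ whenever $i,j$ are symmetric in $v$, so (ET) follows. ($f$-IES) and ($f$-IER) are immediate, since the respective formulas depend on $v$ only through the pair $(f_i(v),\ v(N)-\sum_k f_k(v))$ or $(f_i(v),\ v(N)/\sum_k f_k(v))$.

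For the ``only if'' direction, I focus on the ESS case; the PS case follows the same pattern with the ratio $\rho(v):=v(N)/\sum_k f_k(v)$ replacing the surplus $s(v):=v(N)-\sum_k f_k(v)$ and ($f$-IER) in place of ($f$-IES). Fix $v$ and let $c$ be the scalar granted by Assumption (2). First, use Assumption (2) to pick a symmetric game $w^*$ with $f_k(w^*)=c$ for every $k$, and use Assumption (3) to freely set $w^*(N)=nc+s(v)$ without disturbing $f(w^*)$; (E) and (ET) then force $\varphi_j(w^*)=c+s(v)/n$ for every $j$. Next, for each $i\in N$, use Assumption (2) to obtain a game $w^i$ with $f_i(w^i)=f_i(v)$ and $f_j(w^i)=c$ for $j\neq i$, and use Assumption (3) to reset $w^i(N)$ so that $w^i(N)-\sum_k f_k(w^i)=s(v)$. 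For each $j\neq i$, the pair $(w^i,w^*)$ satisfies the premise of ($f$-IES) at player $j$, so $\varphi_j(w^i)=c+s(v)/n$; applying (E) in $w^i$ then yields $\varphi_i(w^i)=f_i(v)+s(v)/n$. Finally, the pair $(v,w^i)$ satisfies the premise of ($f$-IES) at $i$, giving $\varphi_i(v)=\varphi_i(w^i)=f_i(v)+s(v)/n$, which is exactly $ESS_i(f)(v)$.

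The main obstacle I anticipate is not the algebra, which is transparent, but confirming that all auxiliary games lie in the required domain. Assumption (2) is what lets me realize the target vector $(c,\ldots,c,f_i(v),c,\ldots,c)$ as $f$ of some game and yields the symmetry of $w^*$; Assumption (3) is what lets me tune the single value $w(N)$ to match the surplus or ratio constraint without perturbing $f(w)$. In the PS variant this tuning must additionally respect the positivity requirements built into $\mathcal{V}^N_+$ and $\mathcal{F}_+$, which will constrain the sign of $c$ and of $w^i(N)$ and is the one place where genuine care is required.
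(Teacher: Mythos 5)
Your proof is correct: the auxiliary games $w^*$ and $w^i$ exist precisely because Assumption (2) supplies games realizing the vectors $(c,\dots,c)$ and $(c,\dots,f_i(v),\dots,c)$ and Assumption (3) lets you retune $w(N)$ without disturbing $f(w)$ or pairwise symmetry (the grand coalition never enters the symmetry condition), and the chain (ET)+(E) on $w^*$, then ($f$-IES)/($f$-IER) plus (E) on $w^i$, then ($f$-IES)/($f$-IER) back to $v$ delivers exactly the $f$-ESS/$f$-PS formula. The paper itself states this theorem as a citation without reproducing a proof, but your argument is the Casajus-style construction that the hypotheses are evidently designed for and that the paper redeploys in its own proof of Theorem \ref{operator main result}, so there is nothing further to flag beyond the domain/positivity bookkeeping you already note.
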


\section{An efficient extension operator}\label{sec_main}

Suppose that the players agree upon using $f$ as a solution.
However, it does not necessarily satisfy efficiency.
Hence, there may remain a surplus or deficit after allocation according to $f$, and we must determine how to redistribute the residual $v(N)-\sum_{k \in N}f_k(v)$ among the players.
The $f$-ESS and $f$-PS values are natural candidates: both always satisfy (E), and if $f$ itself satisfies (E), then they coincide with $f$.
Therefore, we can safely use them as efficient modifications of $f$.

In this sense, the $f$-ESS and $f$-PS values can be regarded as \textit{efficient extensions} of $f$.
In practice, however, there may be no complete agreement on which solution should be regarded as plausible.
In particular, the legitimacy of the individual share $f$ is often not uniquely determined.
In such cases, it is natural to consider a broader class of plausible solutions as possible status quo allocations, denoted by $\mathcal{D}$.
This raises the need for a systematic way to justify how each solution $f \in \mathcal{D}$ should be extended to an efficient one.
To address this issue, we introduce a general framework that maps solutions into other solutions, and define an \textit{efficient extension operator} as follows.

\begin{definition}\label{def eeop}
Let $\mathcal{D} \subseteq \mathcal{F}$ be a set of solutions.
An operator $\Phi: \mathcal{D} \rightarrow \mathcal{F}$ is an efficient extension operator in $\mathcal{D}$ if, for any $f \in \mathcal{D}$, $\Phi(f)$ satisfies (E).
\end{definition}

An efficient extension operator can be interpreted as a reduced-form bargaining mechanism in the following sense.
Suppose that the players initially agree to adopt a solution concept $f$, which may fail to be efficient.
Recognizing this inefficiency, the players engage in a bargaining process aimed at modifying $f$ into an efficient solution.
During this process, they propose alternative solutions in a structured manner, possibly reflecting strategic considerations or normative principles.
This bargaining process eventually leads to a consensus on a new solution $f'$, which is efficient.
By abstracting away the details of the bargaining process—such as the protocol, strategic environment, and equilibrium concept—we model the outcome as the application of an operator $\Phi$, which maps the original solution $f$ to an efficient one $\Phi(f)=f'$.

If $\mathcal{D}=\mathcal{F}$, we simply call an efficient extension operator in $\mathcal{F}$ an efficient extension operator.
For $\mathcal{D} \subseteq \mathcal{F}$, we say that an efficient extension operator in $\mathcal{D}$ is the egalitarian (resp. proportional) surplus sharing operator if $\Phi(f)$ coincides with the $f$-ESS (resp. $f$-PS) value for any $f \in \mathcal{D}$.
By definition, the $f$-ESS/PS value is the ESS/PS operator in $\mathcal{D}=\{f\}$ for some $f$.
Moreover, Theorem \ref{f-ess/pd exogenous} shows that if a solution $\varphi$ satisfies (E), (ET), and the corresponding axiom for some $f$ satisfying certain assumptions, then it is the ESS/PS operator in $\mathcal{D}=\{f\}$.

We introduced the concept of efficient extension operators because it provides a fundamentally different perspective on the foundations of allocation rules.
In particular, although one might consider that our approach is the same as repeatedly applying Theorem \ref{f-ess/pd exogenous} by \cite{funakikoriyama2025} to each solution $f \in \mathcal{D}$ one-by-one, this is not the case.
Technically speaking, Theorem \ref{f-ess/pd exogenous} characterizes an efficient solution separately for the fixed benchmark satisfying several technical assumptions.
If $f \in \mathcal{D}$ does not satisfy such technical assumptions, such as symmetry and richness conditions stated in (2) and (3) in Theorem \ref{f-ess/pd exogenous}, we cannot apply the result. 
In contrast, our object of study is a single operator acting on an entire class of benchmark solutions, including solutions that do not satisfy the assumptions required in Theorem \ref{f-ess/pd exogenous}.
To this end, we formulate operator-level axioms that relate different benchmark solutions through a common transformation rule.

Similar to the axioms for solutions, we can, with a slight abuse of terminology, define axioms for efficient extension operators as follows.

\begin{itemize}
\item[] \textbf{Equal treatment (ET)}: 
For any $f \in \mathcal{D}$ and $i, j \in N$, if $f_i=f_j$, then $\Phi_i(f)=\Phi_j(f)$.

\item[] \textbf{Equality for equal surplus (EES)}: 
For any $f, f' \in \mathcal{D}$, $v \in \mathcal{V}^N$, and $i \in N$, if $\sum_{k \in N}f_k(v)=\sum_{k \in N}f'_k(v)$ and $f_i(v)=f'_i(v)$, then $\Phi_i(f)(v)=\Phi_i(f')(v)$.
\end{itemize}

Equal treatment (ET) requires that players $i$ and $j$ should be treated equally by the modified solution $\Phi(f)$ if they are treated equally by $f$, in the sense that $f_i=f_j$.
Note that this requirement is weaker than asking $f$ itself to satisfy symmetry or equal treatment.
Thus, (ET) is not a strong restriction on an operator.
Equality for equal surplus (EES) requires that, for any player $i$, if the surplus with respect to $f$ and the individual allocation are equal to those of $f'$ at $v$, then player $i$’s allocation under $\Phi(f)$ and $\Phi(f')$ must also coincide at $v$.
Note that $v(N)-\sum_{k \in N}f_k(v)=v(N)-\sum_{k \in N}f'_k(v)$ is equivalent to the first condition of the axiom.

The above two axioms for an operator $\Phi$ can be regarded as \textit{universal axioms}, since they restrict the functional form of $\Phi$ depending on $f$, but not on $v$.
That is, they do not constrain how $\Phi(f)$ behaves across different games.
Hence, we additionally introduce \textit{situational axioms} that specify how the modified solution $\Phi(f)$ should behave with respect to a given $v$.
The following theorem characterizes the ESS and PS operators in a unified manner, differing only in the choice of the situational axiom.

\begin{theorem}\label{operator main result}
\begin{itemize}
\item[(1)] An efficient extension operator $\Phi: \mathcal{F} \rightarrow \mathcal{F}$ satisfies (ET) and (EES), and $\Phi(f)$ satisfies ($f$-IES) for any $f \in \mathcal{F}$ if and only if $\Phi(f)=ESS(f)$.

\item[(2)] An efficient extension operator $\Phi: \mathcal{F}_+ \rightarrow \mathcal{F}_+$ satisfies (ET) and (EES), and $\Phi(f)$ satisfies ($f$-IER) for any $f \in \mathcal{F}_+$ if and only if $\Phi(f)=PS(f)$.
\end{itemize}
\end{theorem}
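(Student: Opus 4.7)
I would handle the two parts in parallel, since they share the same skeleton and differ only through the situational axiom. The "if" direction is in each case a one-line verification from the closed forms of $ESS(f)$ and $PS(f)$: each plainly satisfies (E), the operator-level (ET) and (EES), and its situational axiom, so I would dispatch this quickly. The heart of the argument is the "only if" direction, whose guiding idea is that the universal axioms together with the situational axiom pin $\Phi_i(f)(v)$ down to a function of only two scalars---$f_i(v)$ together with the surplus $v(N)-\sum_k f_k(v)$ in part (1), or the ratio $v(N)/\sum_k f_k(v)$ in part (2)---after which efficiency selects the form uniquely.

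For part (1), fix $f\in\mathcal{F}$, $v\in\mathcal{V}^N$, and $i\in N$, and proceed by three reductions. \emph{Step 1.} The hypothesis that $\Phi(f)$ satisfies ($f$-IES) yields a function $G_{f,i}$ with
\[
\Phi_i(f)(v)=G_{f,i}\bigl(f_i(v),\;v(N)-\textstyle\sum_k f_k(v)\bigr).
\]
\emph{Step 2.} To remove the $f$-dependence, given two realizations $(f,v)$ and $(f',v')$ of a common $(a,b)$, I construct an auxiliary solution $\tilde f\in\mathcal{F}$ that agrees with $f$ at $v$ and with $f'$ at $v'$; chaining ($\tilde f$-IES) along $(\tilde f,v)\to(\tilde f,v')$ with (EES) applied at each endpoint gives $\Phi_i(f)(v)=\Phi_i(f')(v')$, so $G_{f,i}$ collapses to an $f$-free function $G_i$. \emph{Step 3.} Applying operator (ET) to the constant solution $\bar f\equiv(a,\dots,a)$ together with any $v$ satisfying $v(N)=na+b$ forces $G_i=G_j=:G$. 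Since $(a_1,\dots,a_n,b)\in\mathbb{R}^{n+1}$ can be realized by a suitable $(f,v)$, efficiency of $\Phi(f)$ becomes
\[
\sum_{i=1}^n G(a_i,b)=\sum_{i=1}^n a_i+b \quad \text{for all } (a_1,\dots,a_n,b)\in\mathbb{R}^{n+1},
\]
and the substitution $a_i\equiv a$ produces $G(a,b)=a+b/n$, i.e.\ $\Phi(f)=ESS(f)$.

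Part (2) follows the same three-step template with ($f$-IER) replacing ($f$-IES) and the ratio $r=v(N)/\sum_k f_k(v)$ replacing the surplus $b$; the resulting functional equation is $\sum_i G(a_i,r)=r\sum_i a_i$, and the symmetric substitution $a_i\equiv a$ gives $G(a,r)=ar$, whence $\Phi(f)=PS(f)$. The main obstacle I anticipate is feasibility bookkeeping: every auxiliary object---$\tilde f$, the constant $\bar f$, and the $(f,v)$ realizing a prescribed tuple---must lie in the ambient domain. In $\mathcal{F}$ this is automatic; in $\mathcal{F}_+$ the positivity constraint $\sum_k f_k>0$ on $\mathcal{V}^N_+$ forbids $a\le 0$ in the symmetric substitution, and I would recover those cases by an asymmetric substitution $a_1=a$, $a_2=\cdots=a_n=a'$ with $a'>0$ large enough to keep the sum positive, using the already-known value $G(a',r)=a'r$ to back out $G(a,r)=ra$. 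Notably, the argument is self-contained and bypasses Theorem~\ref{f-ess/pd exogenous}, whose (SYM) hypothesis on $f$ cannot in general be matched by any auxiliary symmetric $f^\ast$ that agrees with a given $f$ at a given (possibly symmetric) $v$.
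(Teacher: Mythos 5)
Your proposal is correct and follows essentially the same route as the paper: the situational axiom plus (EES) collapse $\Phi_i(f)(v)$ to a player- and solution-independent function of $\bigl(f_i(v),\,v(N)-\sum_k f_k(v)\bigr)$ (resp.\ of $\bigl(f_i(v),\,v(N)/\sum_k f_k(v)\bigr)$), which constant solutions together with (ET) and efficiency then pin down. Your explicit chaining through the auxiliary $\tilde f$ and, in part (2), the asymmetric substitution covering $f_i(v)\le 0$ (where the constant solution $\bar f\equiv f_i(v)$ would fall outside $\mathcal{F}_+$) are welcome refinements of steps the paper treats only tersely.
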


Independence of the axioms of Theorem \ref{operator main result} can be verified as follows.
Examples \ref{counterexample_ESS} and \ref{counterexample_PD} imply that (EES) is indispensable.
An example described in Subsection \ref{sec_otherop} implies that ($f$-IES) and ($f$-IER) are indispensable and 
the same example with player-specific weights implies that (ET) is indispensable.

We briefly explain the main idea of the proofs, with formal details deferred to Appendix \ref{appendix_proof_main}.
First, by (ET) and (EES), we can show that $\Phi_i(f)$ depends only on $f_i$ and $\sum_{k \in N}f_k$, which we denote by $\hat \Phi_i(f_i, \sum_{k \in N}f_k)$.
Next, as a key step, the situational axiom implies that $\hat \Phi_i(f_i, \sum_{k \in N}f_k): \mathcal{V}^N\rightarrow \mathbb{R}^n$ depends only on $f_i(v)$ and either $v(N)-\sum_{k \in N}f_k(v)$ or $v(N)/\sum_{k \in N}f_k(v)$.
Thus, the problem reduces to identifying a function $\gamma_i: \mathbb{R}^2 \rightarrow \mathbb{R}$ such that $\gamma_i(f_i(v), v(N)-\sum_{k \in N}f_k(v))$ (resp. $\gamma_i(f_i(v), v(N)/\sum_{k \in N}f_k(v))$) defines the operator.
This reduced problem is isomorphic to the characterization of monotonic income redistribution functions in $\mathbb{R}^n$ by \cite{Casajus2015}.
Adopting his technique, we can show that $\gamma_i$ yields the ESS/PS operator, respectively.
We also note that, although \cite{Casajus2015}’s original proof does not hold for $n=2$, our argument remains valid even in that case.

To highlight that our proof cannot be obtained by directly applying the argument of \cite{Casajus2015}, let us consider the following weaker version of (EES).

\begin{itemize}
\item[] \textbf{Weak equality for equal surplus (WEES)}: 
For any $f, f' \in \mathcal{D}$ and $i \in N$, if 
\[
\sum_{k \in N}f_k=\sum_{k \in N}f'_k~\text{and}~f_i=f'_i,
\]
then $\Phi_i(f)=\Phi_i(f')$.
\end{itemize}
We show that Theorem \ref{operator main result} itself does not hold under (WEES), although an argument similar to that used by\citet{casajus2015monotonic} partially determines the functional form of $\Phi$ as follows.

Suppose that $n \ge 3$.
Let $\Phi: \mathcal{F}\rightarrow \mathcal{F}$ be an efficient extension operator that satisfies (ET) and (WEES) as follows.
By (WEES), for any $i \in N$, $\Phi_i$ only depends on $f_i$ and $\sum_{k \in N}f_k$, so that $\Phi_i$ can be written as $\Phi_i(f)\equiv \hat \Phi_i(f_i, \sum_{k \in N}f_k)$ for any $f \in \mathcal{F}$.
Then, we show that $\hat \Phi_i=\hat\Phi_j$.
To see this, let us consider
$f = (h, \dots, h, \underset{i\text{-th}}{a}, h, \dots, h)$ and $f' = (h, \dots, h, \underset{j\text{-th}}{a}, h, \dots, h)
 \in \mathcal{F}$ for some $a, h \in \{\lambda: \mathcal{V}^N\rightarrow \mathbb{R}\}$.

Since $f_l=f'_l$ for any $l \neq i,j$ and $\sum_{k \in N}f_k=\sum_{k \in N}f'_k$, by (WEES), we have $\Phi_l(f)=\Phi_l(f')$ for any $l \neq i,j$.
Similarly, by (ET), we have $\Phi_l(f)=\Phi_{l'}(f)$ for any $l,l' \neq i$, and $\Phi_l(f')=\Phi_{l'}(f')$ for any $l,l' \neq j$.
Moreover, since $\Phi(\cdot)$ satisfy (E), we have
\begin{align*}
\Phi_i(f)+(n-1)\Phi_l(f)&=\sum_{k \in N}\Phi_k(f)\\
&=\sum_{k \in N}\Phi_k(f')\\
&=\Phi_j(f')+(n-1)\Phi_l(f'),
\end{align*}
so that $\hat \Phi_i(a, \sum_{k \in N}f_k)=
\Phi_i(f)=\Phi_j(f')=\hat \Phi_j(a, \sum_{k \in N}f_k)$.

Let a functional $\Psi^{\sum_{k \in N}f_k}: \{\lambda: \mathcal{V}^N\rightarrow \mathbb{R}\} \rightarrow \{\lambda: \mathcal{V}^N\rightarrow \mathbb{R}\}$ be such that
\[
\Psi^{\sum_{k \in N}f_k}(a)\equiv \hat \Phi(a, \sum_{k \in N}f_k)-\hat \Phi(\textbf{0}, \sum_{k \in N}f_k).
\]
Next, by (WEES), we show that $\Psi^{\sum_{k \in N}f_k}$ is an additive functional.
Let us consider $f=(a, b, (h)_{-i,j})$, $f'=(a+b, \textbf{0}, (h)_{-i,j}) \in \mathcal{F}$ for some $a,b, h \in \{\lambda: \mathcal{V}^N\rightarrow \mathbb{R}\}$, where $\textbf{0}(v)=0$ for any $v \in \mathcal{V}^N$.
By the same argument as above,  we have the following functional equation
\[
\hat \Phi(a+b, \sum_{k \in N}f_k)-\hat \Phi(\textbf{0}, \sum_{k \in N}f_k)=\Bigl(\hat\Phi(a, \sum_{k \in N}f_k)-\hat\Phi(\textbf{0}, \sum_{k \in N}f_k) \Bigr)+\Bigl(\hat\Phi(b, \sum_{k \in N}f_k)-\hat\Phi(\textbf{0}, \sum_{k \in N}f_k) \Bigr),
\]
which shows that $\Psi^{\sum_{k \in N}f_k}$ is additive.
By (E) of $\Phi(\cdot)$, for any $f \in \mathcal{F}$ and $v$, we have
\begin{align*}
v(N)&=\sum_{i \in N}\Phi_i(f)(v)\\&=\sum_{i \in N}\hat \Phi_i(f_i, \sum_{k \in N}f_k)(v)\\
&=\sum_{i \in N}\Psi^{\sum_{k \in N}f_k}(f_i)(v)+n\hat\Phi(\textbf{0}, \sum_{k \in N}f_k)(v)\\
&=\Psi^{\sum_{k \in N}f_k}\Bigl( \sum_{k \in N}f_k \Bigr)(v)+n\hat\Phi(\textbf{0}, \sum_{k \in N}f_k)(v),
\end{align*}
which implies that
\begin{equation*}\label{ESS_formula}
\Phi_i(f)(v)=\Psi^{\sum_{k \in N}f_k}(f_i)(v)+\frac{1}{n}\Bigl(v(N)- \Psi^{\sum_{k \in N}f_k}\Bigl( \sum_{k \in N}f_k \Bigr)(v)\Bigr), \forall i \in N.
\end{equation*}

Note that the above arguments also hold by replacing $\mathcal{F}$ with $\mathcal{F}_+$ and $\mathcal{V}^N$ with $\mathcal{V}^N_+$, respectively.
The above discussion shows that $\Phi$ is an additive operator.

Under some conditions, we can ensure that $\Psi^{\sum_{k \in N}f_k}$ is linear. 
However, as the following examples \ref{counterexample_ESS} and \ref{counterexample_PD} suggest, linearity is not sufficient to conclude that $\Phi$ is the ESS/PS operator with ($f$-IES)/($f$-IER), respectively, which is contrast to \cite{Casajus2015}.
Therefore, Theorem \ref{operator main result} does not hold under (WEES).

\begin{example}\label{counterexample_ESS}
Fix $w \in \mathcal{V}^N$.
For any $f \in \mathcal{F}$, $i \in N$, and $v \in \mathcal{V}^N$, define
\[
\Phi_i(f)(v)=ESS_i(f)(v)+f_i(w)-\frac{\sum_{k \in N}f_k(w)}{n}.
\]

This operator satisfies (WEES) and (ET).
Moreover, since $w$ is fixed, the term $f_i(w)-\sum_{k \in N}f_k(w)/n$ depends only on $f_i$ and $\sum_{k \in N}f_k$.
Hence, $\Phi(f)$ satisfies ($f$-IES) for any $f \in \mathcal{F}$.
However, this operator does not satisfy (EES).

To see this, consider two solutions $f,f' \in \mathcal{F}$ defined by
$f(v)=(v(\{i\}))_{i \in N}$ and $f'(v)=(v(\{1\}), 0, \ldots, 0)$.
Let $\tilde v \in \mathcal{V}^N$ satisfy $\sum_{k \neq 1}\tilde v(\{k\})=0$.
Then $f_1(\tilde v)=f'_1(\tilde v)$ and $\sum_{k \in N}f_k(\tilde v)=\sum_{k \in N}f'_k(\tilde v)$.
If we choose $w$ such that $\sum_{k \neq 1}w(\{k\}) \neq 0$, then
\[
\Phi_1(f)(\tilde v)-\Phi_1(f')(\tilde v)
= -\frac{\sum_{k \neq 1}w(\{k\})}{n} \neq 0,
\]
which violates (EES).
\end{example}

\begin{example}\label{counterexample_PD}
Fix $w \in \mathcal{V}_+^N$.
For any $f \in \mathcal{F}_+$, $i \in N$, and $v \in \mathcal{V}_+^N$, define
\[
\Phi_i(f)(v)=PS_i(f)(v)+f_i(w)-\frac{\sum_{k \in N}f_k(w)}{n}.
\]

This operator satisfies (WEES) and (ET).
Since $w$ is fixed, the term $f_i(w)-\sum_{k \in N}f_k(w)/n$ again depends only on $f_i$ and $\sum_{k \in N}f_k$.
Hence, $\Phi(f)$ satisfies ($f$-IER) for any $f \in \mathcal{F}_+$.
By applying the same logic as in Example \ref{counterexample_ESS} to the modified domain, we can see that this operator also violates (EES).
\end{example}


Our primary objective is to investigate how efficiency can be systematically achieved starting from a solution that is not necessarily efficient.
However, efficiency may not always be the most appropriate normative benchmark in TU-games.
For instance, when a game is not super-additive, the worth of the grand coalition $v(N)$ does not necessarily represent the maximum achievable value—smaller coalitions may collectively generate a higher total worth.
To address this issue, \citet{beal2021cohesive} introduce the concept of \textit{cohesive efficiency}, which requires that a solution achieves the maximum total worth attainable when coalitions act independently.
Our framework of efficient extension operators can be naturally adapted to incorporate cohesive efficiency as the target criterion.
A detailed discussion is provided in Appendix \ref{Appendix_cohesiveefficiency}.\footnote{We thank Philippe Solal for raising this issue.}

\section{Efficient-fair extension for solutions of communication games}\label{sec communication game}

In this section, we extend the framework developed in the previous section to the setting of communication games introduced by \citet{myerson1977graphs}. 
Our goal is to demonstrate how the basic arguments for efficient extension operators can be applied to this richer model.
Subsection \ref{subsec f-ESS communication} introduces an $f$-ESS value for communication games, in the spirit of \citet{funakikoriyama2025} and Theorem~\ref{f-ess/pd exogenous}.
Building on this foundation, we then define \textit{efficient-fair extension operators}, which combine efficiency with Myerson’s central fairness principle in this context.

\subsection{$f$-ESS value for communication games}\label{subsec f-ESS communication}

A communication network is an undirected graph on $N$ specified by a set of pairs $g \subseteq g^N = \{ \{i,j\} \mid i,j \in N, i \neq j \}$, where each $\{i,j\}$ represents a communication link between players $i$ and $j$. 
The complete network on $N$ is denoted by $g^N$, and the set of all possible networks is $\mathscr{G}^N = \{ g \mid g \subseteq g^N \}$.
For notational simplicity, we write $ij$ instead of $\{i,j\}$.

For any $g \in \mathscr{G}^N$, let $N_i(g) = \{ j \in N \mid i \neq j \text{ and } ij \in g \}$ denote the neighborhood of player $i$ in $g$.
For $ij \in g$, let $g-ij = g \setminus \{ij\}$ denote the network obtained by removing the link $ij$. 
For $S \subseteq N$, the restriction of $g$ to $S$ is $g|_S = \{ ij \in g \mid i,j \in S \}$.
A sequence $i_1, \ldots, i_k$ ($k \ge 2$) is a path from $i_1$ to $i_k$ in $g$ if $i_{h+1} \in N_{i_h}(g)$ for all $h = 1, \ldots, k-1$. 
A network $g$ is said to be connected in $S$ if there exists a path between any pair $i,j \in S$. 
If $g$ is connected in $N$, we simply say that $g$ is connected.
A coalition $S$ is a component of $g$ if $g$ is connected in $S$ and not connected in $S \cup \{i\}$ for any $i \notin S$.
For each $S \subseteq N$, let $S/g$ denote the partition of $S$ into the components of $g|_S$.

A communication game is a pair $(v,g) \in \mathcal{V}^N \times \mathscr{G}^N$. 
A solution in this framework is a mapping $\varphi : \mathcal{V}^N \times \mathscr{G}^N \to \mathbb{R}^n$, and we denote the set of all such mappings by $\mathcal{F}_{\mathscr{G}}$.
The most prominent solution in this class is the \textit{Myerson value}, defined by
\[
My_i(v,g) = Sh_i(v^g),
\]
where $v^g(S) = \sum_{T \in S/g} v(T)$ for all $S \subseteq N$.
\citet{myerson1977graphs} proved that this is the unique solution satisfying component efficiency and fairness.

\begin{itemize}
\item[] \textbf{Component efficiency (CE):} 
For any $(v,g)$ and any component $S \in N/g$, $\sum_{i \in S}\varphi_i(v,g) = v(S)$.

\item[] \textbf{Fairness at $v$ (FA-$v$):}
Fix $v \in \mathcal{V}^N$. 
For any $g \in \mathscr{G}^N$ and $ij \in g$, 
\[
\varphi_i(v,g) - \varphi_i(v,g-ij) = \varphi_j(v,g) - \varphi_j(v,g-ij).
\]

\item[] \textbf{Fairness (FA):}
$\varphi$ satisfies (FA-$v$) for every $v \in \mathcal{V}^N$.
\end{itemize}

(CE) requires that the total payoff within each connected component equals the worth created by that component—players in disconnected components do not cooperate.
(FA) requires that, when a link $ij$ is severed, the change in the payoffs of players $i$ and $j$ is identical.  
This captures Myerson’s idea that each player has equal bargaining power within their local network.\footnote{\cite{myerson1977graphs} originally only considers (FA). We introduce (FA-$v$) for our purpose to consider efficient-fair extension operators in Subsection \ref{subsec EFOp}.}

The Myerson value is not efficient unless $g$ is connected.
To restore efficiency, \citet{vandenbrinketal2012} proposed the \textit{efficient egalitarian Myerson value}, obtained by applying the ESS method to the Myerson value:
\[
EEMy_i(v,g)
= My_i(v,g) + \frac{1}{n}\left( v(N) - \sum_{k \in N} My_k(v,g) \right).
\]
They showed that this solution uniquely satisfies (E), (FA), and (FDS).

\begin{itemize}
\item[] \textbf{Efficiency (E):} 
For all $(v,g)$, $\sum_{i \in N}\varphi_i(v,g) = v(N)$.

\item[] \textbf{Fair distribution of the surplus (FDS):}\footnote{The original definition in \citet{vandenbrinketal2012} is expressed using subgames. Under (CE), the current formulation is equivalent.}
For all $(v,g)$ and all $C,C' \in N/g$,
\[
\frac{1}{|C|}\!\left( \sum_{i \in C}\varphi_i(v,g) - v(C) \right)
= \frac{1}{|C'|}\!\left( \sum_{i \in C'}\varphi_i(v,g) - v(C') \right).
\]
\end{itemize}

Following \citet{funakikoriyama2025}, we generalize the efficient egalitarian Myerson value by defining the equal-surplus sharing rule for any given solution $f \in \mathcal{F}_{\mathscr{G}}$:
\[
\varphi_i(v,g)
= f_i(v,g) + \frac{1}{n}\left( v(N) - \sum_{k \in N} f_k(v,g) \right).
\]
To reflect the informational role of $f$ in redistributing the surplus, we modify (FDS) accordingly.

\begin{itemize}
\item[] \textbf{$f$-fair distribution of the surplus ($f$-FDS):} 
For all $(v,g)$ and all $C,C' \in N/g$,
\[
\frac{1}{|C|}\!\left( \sum_{i \in C} (\varphi_i(v,g) - f_i(v,g)) \right)
= \frac{1}{|C'|}\!\left( \sum_{i \in C'} (\varphi_i(v,g) - f_i(v,g)) \right).
\]
\end{itemize}

($f$-FDS) requires that the surplus, measured relative to the reference solution $f$, is distributed equally across components—analogous to (FDS), but benchmarked against $f$ instead of component values.

Once (CE) is dropped, there are many solutions that satisfy (FA).
For example, consider a solution $f^\lambda=\lambda My+\phi$, where $\lambda \neq 1$ and $\phi: \mathcal{V}^N \rightarrow \mathbb{R}^n$ is an arbitrary solution except for the null solution (e.g., $\phi(v)=0$ for any $v \in \mathcal{V}^N$) on standard TU-games.
$f^{\lambda}$ satisfies (FA) since $\phi$ is independent of the communication network.

The next theorem extends the characterization of $EEMy$ in \citet{vandenbrinketal2012}, which can accommodate any such solution as a benchmark solution.
When the communication network is complete, the communication game reduces to the underlying TU-game. 
Consequently, the characterized solution in Theorem \ref{f-ESS network 1} reduces to that of Theorem \ref{f-ess/pd exogenous}, where different axioms dealing with general communication networks are used.
Hence, we also regard Theorem  \ref{f-ESS network 1} as a communication-game variant of Theorem~\ref{f-ess/pd exogenous}.

\begin{theorem}\label{f-ESS network 1}
Suppose that $f \in \mathcal{F}_{\mathscr{G}}$ satisfies (FA).
Then a solution $\varphi$ satisfies (E), (FA), and ($f$-FDS) if and only if 
\[
\varphi_i(v,g)
= f_i(v,g) + \frac{1}{n}\left( v(N) - \sum_{k \in N} f_k(v,g) \right)
\]
for all $(v,g)$ and all $i \in N$.
\end{theorem}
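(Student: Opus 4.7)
\textbf{Plan of proof for Theorem~\ref{f-ESS network 1}.} The plan is to reduce the statement to a claim about the difference $\psi_i(v,g):=\varphi_i(v,g)-f_i(v,g)$. Since both $\varphi$ and $f$ satisfy (FA), a direct subtraction shows that $\psi$ also satisfies (FA); by (E), $\sum_{k\in N}\psi_k(v,g)=v(N)-\sum_{k\in N}f_k(v,g)$; and by ($f$-FDS), the average of $\psi$ on every component of $g$ is the same. The target becomes showing that $\psi_i(v,g)=\tfrac{1}{n}\bigl(v(N)-\sum_{k\in N}f_k(v,g)\bigr)$ for every $i\in N$.

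The ``if'' direction I would dispatch by a short direct verification: summing over $N$ gives (E); inside any component $C$, the added correction term is independent of $i$, so $\tfrac{1}{|C|}\sum_{i\in C}(\varphi_i-f_i)$ reduces to $\tfrac{1}{n}\bigl(v(N)-\sum_{k\in N}f_k(v,g)\bigr)$, which is independent of $C$, giving ($f$-FDS); and (FA) for $\varphi$ follows because the correction term depends on $g$ only through $\sum_{k\in N}f_k(v,g)$, so the $i$- and $j$-changes when removing a link $ij$ coincide with those for $f$, which satisfies (FA) by assumption.

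The ``only if'' direction I would prove by induction on $|g|$, fixing $v$ throughout. For the base case $g=\emptyset$, each singleton is a component, so ($f$-FDS) forces $\psi_i(v,\emptyset)$ to be the same constant across $i$, and (E) pins that constant to $\tfrac{1}{n}\bigl(v(N)-\sum_{k\in N}f_k(v,\emptyset)\bigr)$. For the inductive step, pick any $ij\in g$; by IH applied to $g-ij$, $\psi_k(v,g-ij)$ is the same constant in $k$, hence in particular $\psi_i(v,g-ij)=\psi_j(v,g-ij)$. Applying (FA) for $\psi$ to the link $ij$ then yields $\psi_i(v,g)=\psi_j(v,g)$. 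Since this holds for every link $ij\in g$, $\psi(v,g)$ is constant on each component of $g$; ($f$-FDS) equalises these constants across components; and (E) fixes the common value to $\tfrac{1}{n}\bigl(v(N)-\sum_{k\in N}f_k(v,g)\bigr)$.

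The only real subtlety is recognising that passing to the difference $\psi=\varphi-f$ is the right move: it converts (FA) for $\varphi$ into a property of $\psi$ that interacts cleanly with ($f$-FDS), and it replaces the statement to be proved with the much simpler ``$\psi$ is constant on $N$''. Once this reduction is made, the induction on $|g|$ transmits constancy from $g-ij$ to $g$ through (FA), which is the mechanism that makes the argument go through with (FA) alone rather than a stronger linking axiom; I do not anticipate any further obstacle, and no case analysis on $n$ is needed.
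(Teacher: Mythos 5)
Your proof is correct, and its skeleton coincides with the paper's: the ``if'' direction is a direct verification, and the ``only if'' direction is an induction on $|g|$ whose base case ($g=\emptyset$, all components singletons) and whose use of the (E)$+$($f$-FDS) consequence (the paper's Lemma~\ref{E+FDS}, which is exactly your observation that the component averages of $\psi=\varphi-f$ are all equal and sum correctly) are identical. The one place you genuinely diverge is the inductive step. The paper keeps working with $\varphi$ itself: it telescopes (FA) along a path from $i$ to $j$ inside a component, so that each pairwise difference $\varphi_i(v,g)-\varphi_j(v,g)$ equals a constant determined by the induction hypothesis at smaller graphs, and then argues that these differences together with the component-sum equation form $|C|$ linearly independent equations in $|C|$ unknowns. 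You instead pass to $\psi=\varphi-f$, note that $\psi$ inherits (FA) by linearity, and observe that the induction hypothesis makes $\psi(v,g-ij)$ \emph{constant}, so (FA) at the link $ij$ immediately gives $\psi_i(v,g)=\psi_j(v,g)$ for adjacent players; constancy on components, then across components via ($f$-FDS), then the value via (E). This buys you a cleaner step: no path-telescoping and no appeal to linear independence of a system --- the determined constant in the paper's argument is simply zero in your formulation, and recognizing that is the simplification. Both arguments are complete and rely on exactly the same axioms at the same points.
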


\subsection{Efficient-fair extension operators}\label{subsec EFOp}

As discussed in Subsection~\ref{subsec f-ESS communication}, fairness plays a key role in identifying desirable solutions for communication games.
We now define \textit{efficient-fair extension operators}, which extend Definition~\ref{def eeop} while preserving local fairness.

\begin{definition}\label{def efop}
Let $\mathcal{D} \subseteq \mathcal{F}_{\mathscr{G}}$.
An operator $\Phi: \mathcal{D} \to \mathcal{F}_{\mathscr{G}}$ is an \textit{efficient-fair extension operator} in $\mathcal{D}$ if:
\begin{itemize}
\item[(1)] For all $f \in \mathcal{D}$, $\Phi(f)$ satisfies (E);
\item[(2)] For all $f \in \mathcal{D}$ and all $v \in \mathcal{V}^N$, if $f$ satisfies (FA-$v$) at $v$, then $\Phi(f)$ also satisfies (FA-$v$) at $v$.
\end{itemize}
\end{definition}

The first condition coincides with Definition~\ref{def eeop}.
The second, which we call the \textit{local fairness preservation property}, requires that the extended solution preserves fairness at every $v$ where the original solution $f$ is locally fair.
That is, if $f$ satisfies fairness for certain games but not others, $\Phi(f)$ should retain fairness wherever it already holds.

The following axiom is the communication-game counterpart of the one introduced earlier.

\begin{itemize}

\item[] \textbf{Weak equality for equal surplus (WEES):} 
For any $f,f' \in \mathcal{F}_{\mathscr{G}}$, $g \in \mathscr{G}^N$, and $i \in N$, if
\[
\sum_{k \in N} f_k(\cdot,g) = \sum_{k \in N} f'_k(\cdot,g)
\quad\text{and}\quad
f_i(\cdot,g) = f'_i(\cdot,g),
\]
then $\Phi_i(f)(\cdot,g) = \Phi_i(f')(\cdot,g)$.
\end{itemize}

The next result characterizes the ESS operator as the unique efficient-fair extension operator for communication games.

\begin{theorem}\label{efficient-fair main}
An efficient-fair extension operator $\Phi: \mathcal{F}_{\mathscr{G}} \to \mathcal{F}_{\mathscr{G}}$ satisfies (WEES) and ensures that $\Phi(f)$ satisfies ($f$-FDS) for all $f \in \mathcal{F}_{\mathscr{G}}$ if and only if
\[
\Phi_i(f)(v,g)
= f_i(v,g) + \frac{1}{n}\left( v(N) - \sum_{k \in N} f_k(v,g) \right)
\]
for all $(v,g)$ and all $i \in N$.
\end{theorem}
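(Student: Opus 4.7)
The plan is to leverage Theorem~\ref{f-ESS network 1}, which already characterizes the $f$-ESS form when the reference solution $f$ itself is fair, and to use (WEES) to transport that conclusion from a suitable fair auxiliary solution back to an arbitrary $f \in \mathcal{F}_{\mathscr{G}}$. For the "if" direction, verification is direct: the formula $\Phi_i(f)(v,g) = f_i(v,g) + \frac{1}{n}(v(N) - \sum_k f_k(v,g))$ satisfies (E) by summing over $i$; satisfies ($f$-FDS) because every player's surplus share equals the common constant $\frac{1}{n}(v(N) - \sum_k f_k(v,g))$ regardless of component; satisfies (ET) and (WEES) by inspection of the formula; and preserves (FA-$v$) because the surplus adjustment is symmetric in $i$ and $j$, so the link-cutting difference $[\Phi_i(f)(v,g) - \Phi_i(f)(v,g-ij)] - [\Phi_j(f)(v,g) - \Phi_j(f)(v,g-ij)]$ collapses to the analogous $f$-difference, which vanishes by hypothesis on $f$.

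For the "only if" direction, fix $f \in \mathcal{F}_{\mathscr{G}}$ and an arbitrary target network $g_0 \in \mathscr{G}^N$, and define an auxiliary solution $f^* \in \mathcal{F}_{\mathscr{G}}$ by $f^*(v,g) := f(v,g_0)$ for every $v \in \mathcal{V}^N$ and every $g \in \mathscr{G}^N$, so that $f^*$ disregards its network argument and always returns the value $f$ takes at $g_0$. Then for any $g$ and any $ij \in g$ one has $f^*_i(v,g) - f^*_i(v,g-ij) = 0 = f^*_j(v,g) - f^*_j(v,g-ij)$, so $f^*$ satisfies (FA-$v$) at every $v$, hence (FA). Because $f^*(\cdot,g_0) = f(\cdot,g_0)$, the hypotheses of (WEES) at the network $g_0$ are met for each $i$, yielding $\Phi_i(f)(\cdot,g_0) = \Phi_i(f^*)(\cdot,g_0)$. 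The local fairness preservation property built into Definition~\ref{def efop} then guarantees that $\Phi(f^*)$ inherits (FA-$v$) at every $v$, so combining (E), (FA), and ($f^*$-FDS), Theorem~\ref{f-ESS network 1} applied to $f^*$ forces $\Phi_i(f^*)(v,g) = f^*_i(v,g) + \frac{1}{n}(v(N) - \sum_k f^*_k(v,g))$ for every $(v,g)$. Specializing to $g = g_0$ and substituting $f^*(\cdot,g_0) = f(\cdot,g_0)$ gives the desired formula at $(v,g_0)$; since $g_0$ was arbitrary, it holds everywhere.

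The main obstacle I anticipate is identifying the right auxiliary $f^*$: one might try to extend $f(\cdot,g_0)$ to other networks in a way that preserves additional structure linking $g_0$ to its neighbors, but the decisive observation is that (WEES) only requires agreement of $f$ and $f^*$ at the single network $g_0$, so the trivial constant-in-$g$ lift simultaneously discharges both requirements—(FA) for $f^*$ and agreement with $f$ at $g_0$. A secondary subtlety worth double-checking is the precise role of (ET): the argument above does not invoke it directly, since Theorem~\ref{f-ESS network 1} itself does not require equal treatment; I expect (ET) to enter the characterization either for independence of the axioms or to rule out pathological operators outside the scope of the reduction, rather than for the core deduction of the formula.
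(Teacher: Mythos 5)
Your proof is correct, and it streamlines the paper's argument in a way worth noting. The paper also reduces to Theorem~\ref{f-ESS network 1} via an auxiliary solution and (WEES), but it splits into two cases: when $g$ is disconnected it builds an $f'$ that equalizes $f$ within a component while preserving the total, then invokes (ET) together with Lemma~\ref{E+FDS} to pin down each player's payoff; when $g$ is connected it uses a hybrid $f''$ that agrees with $f$ on the slice $\{(v',g)\}$ and is constant in the network on the slice $\{(v,g')\}$, so that only the local property (FA-$v$) holds and a localized version of Theorem~\ref{f-ESS network 1} must be applied at the fixed $v$. Your single auxiliary $f^*(v,g)=f(v,g_0)$, constant in the network argument everywhere, handles both cases at once: it satisfies full (FA), so the global statement of Theorem~\ref{f-ESS network 1} applies directly, and it agrees with $f$ on the entire $g_0$-slice, which is exactly what (WEES) needs. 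Two consequences of your route are worth flagging. First, you correctly observe that (ET) plays no role in your deduction, whereas the paper's proof does invoke it in the disconnected case; your argument therefore shows that (ET) is redundant for the ``only if'' direction of this theorem (it is of course still satisfied by the ESS operator, so the ``if'' direction is unaffected). Second, your $f^*$ is fully specified on all of $\mathcal{V}^N\times\mathscr{G}^N$, avoiding the ambiguity in the paper's $f''$, which is left undefined at pairs $(v',g')$ with $v'\neq v$ and $g'\neq g$. All the individual steps check out: $f^*$ trivially satisfies (FA-$v$) at every $v$, condition (2) of Definition~\ref{def efop} then transfers (FA) to $\Phi(f^*)$, the hypothesis that $\Phi(h)$ satisfies ($h$-FDS) for every $h\in\mathcal{F}_{\mathscr{G}}$ supplies ($f^*$-FDS), and (WEES) at $g_0$ transports the resulting formula back to $\Phi(f)$.
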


Independence of the axioms of Theorem \ref{efficient-fair main} is shown in Appendix \ref{appendix: independence_network}.
Although the axioms resemble those in Theorem~\ref{operator main result}, the proof strategy differs. 
Here, we do not rely on the technique of \citet{Casajus2015}. 
Instead, we use the characterization in Theorem~\ref{f-ESS network 1}, which naturally fits the local fairness preservation property.
The detailed proof is provided in Appendix~\ref{Appendix_proof_communication}.

\section{Efficient extension for solutions of coalition structure games}\label{sec_coalition}

In this section, we consider extension operators in a different enriched model of TU-games, called games with coalition structures \citep[e.g.,][]{aumanndreze1974, owen1977}, or coalition structure games for short.
The analysis in this section runs in exact parallel to Section~\ref{sec communication game}, 
thereby illustrating that our framework for efficient extension operators applies 
beyond communication networks to coalition structure games.
We show the parallel results in Section \ref{sec communication game} by introducing a suitable counterpart of the fairness property in this framework.

\subsection{$f$-ESS value for coalition structure games}\label{subsec_coalition}

In Section \ref{sec communication game}, we preserve the fairness criterion (FA) proposed by \cite{myerson1977graphs} to consider a new rule by modifying the underlying rules in Theorem \ref{f-ESS network 1} and its generalization to efficient-fair extension operators in Theorem \ref{efficient-fair main}.

Within the framework of coalition structure games, \citet{slikker2000} introduces the concept of Restricted Balanced Contributions (RBC) as an analogue of (FA), and characterizes the Aumann--Dr\`eze value—combined with component efficiency—in a manner analogous to the characterization of the Myerson value.
As we formally discuss below, since the Aumann--Dr\`eze value can be viewed as a special case of the Myerson value for a network in which each component is fully connected, (RBC) is indeed regarded as the counterpart of (FA) in the context of coalition structure games.
However, we will show that (RBC) is too strong a requirement when considering ESS-type rules within this setting.
To address this technical difficulty, we introduce a weaker version of (RBC), based on the concept of Balanced Cycle Contributions (BCC) proposed by \citet{kamijokongo2010}.
Since (BCC) is applicable only within a variable player set framework, we accordingly extend our model in what follows.

Let $\mathcal{U}$ be a universal countably infinite player set, and let each player set $N \subset \mathcal{U}$ be finite.  
For $N$, let $\mathcal{V}^N$ denote all TU-games $(N,v)$. 
For $S \subseteq N$, $(S,v|_S) \in \mathcal{V}^S$ is the subgame of $(N,v)$.  
A coalition structure is a partition $\mathcal{P}=\{C_1,\dots,C_m\}$ of $N$, and we denote the set of all partitions by $\Pi^N$.   
A coalition structure game is a triplet $(N,v,\mathcal{P}) \in \mathcal{V}^N \times \Pi^N$, and we denote the set of all such games by $\mathcal{CV} = \bigcup_{N\subset \mathcal{U}, |N|<\infty} (\mathcal{V}^N \times \Pi^N)$.  
A solution is a mapping $\varphi: \mathcal{CV} \to \bigcup_{N\subset \mathcal{U},|N|<\infty} \mathbb{R}^{|N|}$, and let $\mathcal{F}_{\mathcal{C}}$ denote the set of all solutions.

For the counterpart of the Myerson value, the \textit{Aumann-Dr\`eze value} \citep{aumanndreze1974} is defined as follows: for any $(N, v, \mathcal{P}) \in \mathcal{CV}$ and $i \in C \in \mathcal{P}$,
\[
AD_i(N, v, \mathcal{P})=Sh_i(C, v|_C),
\]
where $v|_C \in \mathcal{V}^C$ is the sub-game of $v$ for the player set $C$ (i.e., $v|_C(S)=v(S)$ for any $S \subseteq C$).
Indeed, if every player in a component $C \in N/g$ is connected, that is $g|_C$ is a complete network, $AD_i(N, v, \mathcal{P})=My_i(N, v,g)$ for any $i \in C$.
Accordingly, the equal surplus sharing rule for the given $f \in \mathcal{F}_{\mathcal{C}}$ is defined as follows: for any $(N, v, \mathcal{P}) \in \mathcal{CV}$ and $i \in C \in \mathcal{P}$,
\[
\varphi_i(N, v, \mathcal{P})=f_i(N, v, \mathcal{P})+\frac{1}{n}\Bigl(v(N)-\sum_{k \in N}f_k(N, v, \mathcal{P})   \Bigr).
\]

The corresponding axiom of efficiency and ($f$-FDS) can be written as follows.
\begin{itemize}
\item[] \textbf{Component efficiency (CE)}: For any $(N, v, \mathcal{P}) \in \mathcal{CV}$ and $C \in \mathcal{P}$, $\sum_{i \in C}\varphi_i(N, v,\mathcal{P})=v(C)$.

\item[] \textbf{Efficiency (E)}: For any $(N, v, \mathcal{P}) \in \mathcal{CV}$, $\sum_{i \in N}\varphi_i(N, v,\mathcal{P})=v(N)$.

\item[] \textbf{$f$-fair distribution of the surplus for coalition structures ($f$-FDSC)}:  For any $(N, v, \mathcal{P}) \in \mathcal{CV}$ and $C, C' \in \mathcal{P}$,
\[
\frac{1}{|C|}\Bigl(\sum_{i \in C}\varphi_i(N, v, \mathcal{P})-f_i(N, v, \mathcal{P}) \Bigr)=\frac{1}{|C'|}\Bigl(\sum_{i \in C'}\varphi_i(N, v, \mathcal{P})-f_i(N, v, \mathcal{P}) \Bigr).
\]
\end{itemize}

\cite{slikker2000} shows that the Aumann-Dr\`eze value is the unique solution that satisfies (CE) and (CRBC).
\begin{itemize}
\item[] \textbf{Component restricted balanced contributions (CRBC)}: For any $(N, v, \mathcal{P}) \in \mathcal{CV}$ and $i,j \in C \in \mathcal{P}$, $\varphi_i(N, v, \mathcal{P})-\varphi_i(N, v, \mathcal{P}_{-j})=\varphi_j(N, v, \mathcal{P})-\varphi_j(N, v, \mathcal{P}_{-i})$, where $\mathcal{P}_{-l}=\bigl( \mathcal{P}\setminus \{C\} \bigr) \cup \{C\setminus\{l\}, \{l\}\}, l=i,j$.
\end{itemize}

Component restricted balanced contributions (CRBC) requires that, for each pair of $i,j$ in the same coalition, the effect on the payoff difference by the deletion of the opposite player is symmetric.
At first glance, (CRBC) is a counterpart of (FA) in this framework.
However, it is a much stronger requirement than (FA).
To see this, for a component $C \in N/g$, suppose $g|_C$ is a complete network.
Then, for any $i,j \in C$, the requirement of $\varphi_i(N, v, \mathcal{P})-\varphi_i(N, v, \mathcal{P}_{-j})=\varphi_j(N, v, \mathcal{P})-\varphi_j(N, v, \mathcal{P}_{-i})$ corresponds to
\begin{equation}
\varphi_i(v, g)-\varphi_i(v, g-N_j(g))=\varphi_j(v, g)-\varphi_j(v, g-N_i(g)), \label{eq: FA}
\end{equation}
where $g-N_i(g)=g\setminus\bigcup_{j \in N_i(g)} \{ij\}$ with abuse of notation.
If $C=\{i,j\}$, this equation holds by (FA) because $g-N_i(g)=g-N_j(g)=g-ij$.
However, (FA) does not guarantee  \eqref{eq: FA} in general.

An equal surplus sharing rule for $f \in \mathcal{F}_{\mathcal{C}}$ does not satisfy (CRBC), even if $f$ satisfies (CRBC) in general.
For example, the efficient extension of the Aumann--Dr\`eze value does not satisfy (CRBC), which stands in sharp contrast to the fact that both the Myerson value and the efficient egalitarian  Myerson value satisfy (FA). This contrast highlights the importance of designing an operator that restores efficiency while preserving fairness-type properties across different environments.

This observation indicates the necessity of introducing a weaker axiom in order to accommodate an ESS-type extension of the solution within this framework. 
To this end, we adopt the following weakened version of the balanced cycle contribution property, originally introduced by \cite{kamijokongo2010}.\footnote{
As discussed in \cite{kamijokongo2010}, the requirement of considering cycles of any length can be weakened to considering only cycles of length three. 
It is also noted that (BCC) is a substantially weaker condition than (BC), as it is satisfied by many solutions. 
For further details on this property, see \cite{kamijokongo2010}.}

\begin{itemize}
\item[] \textbf{Restricted balanced cycle contributions (RBCC-$v$)}: Fix $(N, v) \in \mathcal{V}^N$.
For any $C \in \mathcal{P} \in \Pi^N$,
\[
\sum_{l=1}^{|C|} \Bigl( \varphi_{i_l}(N, v, \mathcal{P})-\varphi_{i_l}(N\setminus \{i_{l-1}\}, v, \mathcal{P}_{-i_{l-1}}) \Bigr)=\sum_{l=1}^{|C|} \Bigl( \varphi_{i_l}(N, v,  \mathcal{P})-\varphi_{i_l}(N\setminus \{i_{l+1}\}, v, \mathcal{P}_{-i_{l+1}}) \Bigr),
\]
where $C$ is enumerated as $C=\{i_1, i_2, \ldots, i_{|C|}\}$ with $i_0=i_{|C|}$ and $i_{|C|+1}=i_1$.

\item[] \textbf{Restricted balanced cycle contributions (RBCC)}: $\varphi$ satisfies (RBCC-$v$) for any $(N,v) \in \mathcal{V}^N$ such that $N \subset \mathcal{U}$ with $|N|<\infty$.
 
\end{itemize}

Restricted balanced cycle contributions (RBCC) only requires that the sum of the payoff difference from the deletion of players is invariant for the order in each component.
Notice that (RBCC) cannot work in the finite player framework because it is silent about the relationship among the values on, for example, $\mathcal{P}$, $\mathcal{P}_{-i}$ and $\mathcal{P}_{-j}$, which is specified in (CRBC).
Technically speaking, this difference makes it difficult to use induction on the size of components, which is a key logic in \cite{slikker2000} and Theorem \ref{f-ESS network 1}.
Note also that it requires nothing for a two-player component.
For such components, we consider the following property.

\begin{itemize}
\item[] \textbf{$f$-Equal gain relative to null players ($f$-EGN)}:  For any $(N, v, \mathcal{P}) \in \mathcal{CV}$ and $i, j \in C \in \mathcal{P}$, if $j$ is null in $v$, then $\varphi_i(N, v, \mathcal{P})-\varphi_j(N,v, \mathcal{P})=f_i(N,v,\mathcal{P})-f_j(N,v,\mathcal{P})$.
\end{itemize}

The following result provides a counterpart of Theorem \ref{f-ess/pd exogenous} parallel to Theorem \ref{f-ESS network 1} in communication games.

\begin{theorem}\label{f-ESS_coalition}
Suppose that a solution $f \in \mathcal{F}_{\mathcal{C}}$ satisfies (RBCC).
Then, a solution $\varphi$ satisfies (E), (RBCC), ($f$-EGN), and ($f$-FDSC) if and only if 
\[
\varphi_i(N, v, \mathcal{P})=f_i(N, v, \mathcal{P})+\frac{1}{n}\Bigl(v(N)-\sum_{k \in N}f_k(N, v, \mathcal{P})   \Bigr)
\]
 for any $(N, v, \mathcal{P}) \in \mathcal{CV}$ and $i \in N$.
\end{theorem}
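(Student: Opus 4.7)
My plan treats uniqueness as the substantive direction; sufficiency requires only direct verification. For sufficiency I substitute $\varphi=f+\alpha$ with $\alpha(N,v,\mathcal{P}):=\frac{1}{n}\bigl(v(N)-\sum_k f_k(N,v,\mathcal{P})\bigr)$. Then (E), ($f$-FDSC), and ($f$-EGN) all follow at once because $\alpha$ is independent of $i$. For (RBCC), the $\alpha$-contributions cancel because $\sum_l \alpha(N\setminus\{i_{l-1}\},v,\mathcal{P}_{-i_{l-1}})=\sum_{p\in C}\alpha(N\setminus\{p\},v,\mathcal{P}_{-p})=\sum_l \alpha(N\setminus\{i_{l+1}\},v,\mathcal{P}_{-i_{l+1}})$, so the cyclic condition for $\varphi$ reduces to the one for $f$, which holds by assumption.

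For necessity I introduce $\delta_i(N,v,\mathcal{P}):=\varphi_i(N,v,\mathcal{P})-f_i(N,v,\mathcal{P})$. Since $\varphi$ and $f$ both satisfy (RBCC), so does $\delta$. Combining (E) with ($f$-FDSC) forces every component average of $\delta$ to equal the common value $\alpha(N,v,\mathcal{P})$. It therefore suffices to show that $\delta_i$ is constant across players in each component of $\mathcal{P}$; singleton components are immediate.

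The key step is a null-player extension argument for any component $C$ with $|C|\geq 2$. Given distinct $i,j\in C$, I introduce a fresh null player $k^*$ by setting $N'=N\cup\{k^*\}$, $v'(S)=v(S\setminus\{k^*\})$, and $\mathcal{P}'=(\mathcal{P}\setminus\{C\})\cup\{C\cup\{k^*\}\}$. I then apply (RBCC-$v'$), which $\delta$ satisfies, to the component $C'=C\cup\{k^*\}$ with a cyclic enumeration in which $i$ is the successor of $k^*$ and $j$ is its predecessor. Reorganizing the identity by the removed player $p\in C'$, the condition for $\delta$ reads $\sum_{p\in C'}\bigl[\delta_{s(p)}-\delta_{pred(p)}\bigr]=0$, where $s(p)$ and $pred(p)$ denote the cyclic neighbors of $p$ and each term is evaluated in the subgame $(N'\setminus\{p\},v'|_{N'\setminus\{p\}},\mathcal{P}'_{-p})$. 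For every $p\neq k^*$, the player $k^*$ remains null and lies in the component $C'\setminus\{p\}$ together with both $s(p)$ and $pred(p)$, so ($f$-EGN) forces $\delta_{s(p)}=\delta_{pred(p)}$ in that subgame and the bracket vanishes. The sole surviving term corresponds to $p=k^*$, evaluated in $(N'\setminus\{k^*\},v',\mathcal{P}'_{-k^*})=(N,v,\mathcal{P})$, and reads $\delta_i(N,v,\mathcal{P})-\delta_j(N,v,\mathcal{P})=0$.

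The main obstacle I anticipate is precisely the handling of two-player components, for which (RBCC) in the original game is vacuous and ($f$-EGN) does not directly apply when neither player is null. The null-player extension is introduced precisely to bridge this gap, exploiting the variable-player-set nature of (RBCC) and ($f$-EGN) to produce a nontrivial cyclic identity in the enlarged game that isolates the desired equality in the original game. Once $\delta$ is shown to be constant on each component, identification with the common component average yields the claimed $f$-ESS formula.
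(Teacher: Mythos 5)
Your proposal is correct and follows essentially the same route as the paper's own proof: the component-average identity from (E) and ($f$-FDSC) (the analogue of Lemma \ref{E+FDS}), followed by adjoining a fresh null player to the component and applying (RBCC) to the enlarged component with the new player inserted between the two targeted members, so that ($f$-EGN) kills every term except the one isolating $\varphi_i-\varphi_j$ in the original game. Your reformulation via $\delta=\varphi-f$ and the reindexing of the cyclic sum by the removed player is only a cosmetic streamlining of the paper's computation.
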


If $f=AD$, this defines the \textit{efficient egalitarian Aumann--Dr\`eze value}.
To the best of our knowledge, the formal definition of the solution and its axiomatic foundation have not been investigated in the literature, although efficient extensions of solutions in communication games are widely studied, as we explained in Section \ref{sec communication game}.
Hence, Theorem \ref{f-ESS_coalition} itself would be of independent interest.

To illustrate how (RBCC) and ($f$-EGN) work for the proof, take any $(N,v, \mathcal{P}) \in \mathcal{CV}$ and let us assume that $C=\{i,j\} \in \mathcal{P}$ is a two-player component.
Recall that (RBCC) does not require anything to $\varphi_i$ and $\varphi_j$ in this case.
Choose $k \in \mathcal{U}\setminus N$ and let $N'=N \cup \{k\}$.
Consider a modified game $(N', v', \mathcal{P}')$ and $\mathcal{P'}=(\mathcal{P}\setminus \{C\}) \cup \{C\cup \{k\}\}$.
We also assume that $k$ is a null player in $v'$ and $v'|_N=v$.
\black
Then, by (RBCC), we obtain the equation
\begin{align*}
&\varphi_i(N'\setminus\{k\},v',\mathcal{P'}_{-k})+\varphi_j(N'\setminus\{i\},v',\mathcal{P'}_{-i})+\varphi_k(N'\setminus\{j\},v',\mathcal{P'}_{-j})\\
&=\varphi_i(N'\setminus\{j\},v',\mathcal{P'}_{-j})+\varphi_j(N'\setminus\{k\},v',\mathcal{P'}_{-k})+\varphi_k(N'\setminus\{i\},v',\mathcal{P'}_{-i})\\
\Leftrightarrow~~& \varphi_i(N,v,\mathcal{P})-\varphi_j(N,v,\mathcal{P})\\
&=\Bigl(\varphi_i(N'\setminus\{j\},v',\mathcal{P'}_{-j})-\varphi_k(N'\setminus\{j\},v',\mathcal{P'}_{-j})\Bigr)
-\Bigl(\varphi_j(N'\setminus\{i\},v',\mathcal{P'}_{-i})-\varphi_k(N'\setminus\{i\},v',\mathcal{P'}_{-i})\Bigr).
\end{align*}
Then, by ($f$-EGN), the right-hand side of the above equation is equivalent to
\[
\Bigl(f_i(N'\setminus\{j\},v',\mathcal{P'}_{-j})-f_k(N'\setminus\{j\},v',\mathcal{P'}_{-j})\Bigr)
-\Bigl(f_j(N'\setminus\{i\},v',\mathcal{P'}_{-i})-f_k(N'\setminus\{i\},v',\mathcal{P'}_{-i})\Bigr),
\]
which is an already determined constant because $f$ is fixed as the underlying solution.
We can also show that, by (E) and ($f$-FDSC), 
$\varphi_i(N,v,\mathcal{P})+\varphi_j(N,v,\mathcal{P})$ is constant (see Lemma \ref{E+FDS} for the corresponding part for communication games).
Therefore, we obtain the solvable equations for $\varphi_i(N,v,\mathcal{P})$ and $\varphi_j(N,v,\mathcal{P})$, which implies the uniqueness of $\varphi$.

Note that, although the proof for the uniqueness is similar for Theorem \ref{f-ESS network 1}, the way to derive the solvable equations is different: While (FA) directly induces the system in Theorem \ref{f-ESS network 1}, we need an additional player outside of the underlying player set $N$ as a null player to use (RBCC) and ($f$-EGN) for Theorem \ref{f-ESS_coalition}.

\subsection{Efficient-RBCC extension operators}
We consider corresponding efficient extension operators by adjusting Definition \ref{def efop} to this setup in the similar manner as Section \ref{sec communication game}.

\begin{definition}\label{def efop_coalition}
Let $\mathcal{D} \subseteq \mathcal{F}_{\mathcal{C}}$ be the set of solutions.
An operator $\Phi: \mathcal{D} \rightarrow  \mathcal{F}_{\mathcal{C}}$ is an efficient-RBCC extension operator in $\mathcal{D}$ if
\begin{itemize}
\item[(1)] for any $f \in \mathcal{D}$, $\Phi(f)$ satisfies (E),

\item[(2)] for any $f \in \mathcal{D}$, if $f$ satisfies (RBCC-$v$) at $v$, then $\Phi(f)$ also satisfies (RBCC-$v$) at $v$.
\end{itemize} 
\end{definition}

The first property is the same as in the previous sections.
The second property shares the same spirit as the local fairness preservation for efficient extension operators in communication games.
As we discussed in Subsection \ref{subsec_coalition}, (CRBC) is a strong requirement that ESS operator does not satisfy in general.
Therefore, we require local preservation of (RBCC) in this case.

The following axiom for an operator $\Phi$ is a straightforward generalization of the corresponding one in previous sections.

\begin{itemize}

\item[] \textbf{Weak equality for equal surplus (WEES)}: 
For any $f, f' \in \mathcal{D}$, $\mathcal{P} \in \Pi^N$, and $i \in N$, if 
\[
\sum_{k \in N}f_k(N, \cdot,\mathcal{P})= \sum_{k \in N}f'_k(N, \cdot,\mathcal{P})~\text{and}~f_i(N, \cdot,\mathcal{P})=f'_i(N, \cdot,\mathcal{P}),
\]
then $\Phi_i(f)(N, \cdot,\mathcal{P})=\Phi_i(f')(N, \cdot,\mathcal{P})$.
\end{itemize}

The following result characterizes the ESS operator for the solution of coalition structure games.
The proof is similar to Theorem \ref{efficient-fair main} and it is relegated to Appendix \ref{Appendix_proof_coalition}.

\begin{theorem}\label{ESS coalition}
An efficient-RBCC extension operator $\Phi: \mathcal{F}_{\mathcal{C}} \rightarrow \mathcal{F}_{\mathcal{C}}$ satisfies (WEES) and $\Phi(f) \in \mathcal{F}_{\mathcal{C}}$ satisfies  ($f$-EGN) and ($f$-FDSC)  for any $f \in \mathcal{F}_{\mathcal{C}}$ if and only if it is the ESS operator.
\end{theorem}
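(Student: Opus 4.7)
The plan is to parallel the proof of Theorem~\ref{efficient-fair main}, with Theorem~\ref{f-ESS_coalition} playing the role of the identification result. For sufficiency, I would verify directly that the ESS operator
$\Phi_i(f)(N,v,\mathcal{P}) = f_i(N,v,\mathcal{P}) + \tfrac{1}{n}\bigl(v(N) - \sum_{k\in N} f_k(N,v,\mathcal{P})\bigr)$
satisfies each requirement: (E) is built in; (ET) and (WEES) follow because the surplus offset depends only on the player-independent quantity $\sum_k f_k(N,v,\mathcal{P})$; local preservation of (RBCC-$v$) holds because the offset has identical magnitude at the target game and at each reduced game of the same cardinality along the cycle, so it cancels inside each cycle difference; and ($f$-EGN) and ($f$-FDSC) follow by direct substitution into the formula above.

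For necessity, fix $\Phi$ satisfying the axioms, $f \in \mathcal{F}_{\mathcal{C}}$, and a target $(N,v,\mathcal{P})$. The core step is to construct an auxiliary solution $f^\ast \in \mathcal{F}_{\mathcal{C}}$ with (i) $f^\ast(N,v',\mathcal{P}) = f(N,v',\mathcal{P})$ for every $v' \in \mathcal{V}^N$, and (ii) $f^\ast$ satisfying (RBCC) globally. Once $f^\ast$ is in hand, (WEES) gives $\Phi_i(f)(N,\cdot,\mathcal{P}) = \Phi_i(f^\ast)(N,\cdot,\mathcal{P})$ for every $i \in N$; the local (RBCC) preservation built into the definition of an efficient-RBCC operator lifts (RBCC) from $f^\ast$ to $\Phi(f^\ast)$; combined with the hypothesized (E), ($f^\ast$-EGN), and ($f^\ast$-FDSC) for $\Phi(f^\ast)$, Theorem~\ref{f-ESS_coalition} identifies $\Phi(f^\ast)$ with the $f^\ast$-ESS value. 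Evaluating at $(N,v,\mathcal{P})$ and using $f^\ast(N,v,\mathcal{P}) = f(N,v,\mathcal{P})$ then delivers the claimed formula for $\Phi_i(f)(N,v,\mathcal{P})$.

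The hard part will be constructing $f^\ast$. A natural first attempt is to set $f^\ast = f$ on the slice $\{(N,v',\mathcal{P}) : v' \in \mathcal{V}^N\}$ and $f^\ast = AD$ elsewhere. Since $AD$ satisfies (CRBC), and hence (RBCC), the cycle equation at the target slice reduces to a pure-$AD$ identity after the common $f_{i_l}(N,v',\mathcal{P})$ terms cancel on the two sides. However, (RBCC-$v'$) can still fail at ``boundary'' games $(N \cup \{j\}, v', \mathcal{P}')$ whose $j$-reduction coincides with $(N,\mathcal{P})$, because the cycle then mixes $f$ values pulled from the slice with $AD$ values elsewhere, leaving a residual proportional to the player-wise differences $(f_i - AD_i)(N,v',\mathcal{P})$. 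I would resolve this by redefining $f^\ast$ on each such boundary game as $AD$ plus a player-specific correction calibrated to close the cycle equation, and then iterate this adjustment across games reached by adding further auxiliary players from the countably infinite universe $\mathcal{U}$; the availability of fresh names in $\mathcal{U}$ ensures that successive corrections can be made consistently without reintroducing conflicts at already-handled games. Once a consistent $f^\ast$ is exhibited, the remaining steps described above go through.
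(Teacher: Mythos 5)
Your sufficiency check and your overall architecture for necessity (pass to an auxiliary solution, transfer via (WEES), invoke the preservation property, then apply Theorem~\ref{f-ESS_coalition}) match the spirit of the paper's argument, but the necessity step as you set it up contains a genuine gap. You require an $f^{\ast}$ that satisfies (RBCC) \emph{globally} while agreeing with $f$ on the entire slice $\{(N,v',\mathcal{P}) : v'\in\mathcal{V}^N\}$, and your construction of it does not work. Observe that in (RBCC-$v$) the terms $\varphi_{i_l}(N,v,\mathcal{P})$ cancel between the two sides, so the cycle equation at a ``boundary'' game $(N\cup\{j\},w,\mathcal{P}')$ does not involve $f^{\ast}$ at that game at all --- it involves only the reduced games $(N\cup\{j\}\setminus\{i_l\},w,\mathcal{P}'_{-i_l})$. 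Hence ``redefining $f^{\ast}$ on each such boundary game'' cannot close the violated cycle equation; the residual (proportional to differences of $(f_i-AD_i)(N,w|_N,\mathcal{P})$ across the players adjacent to $j$ in the enumeration) can only be absorbed by altering $f^{\ast}$ at the \emph{sibling} reduced games with $i_l\neq j$. Those alterations then propagate sideways to other games of the same cardinality whose reductions hit the modified siblings, not merely outward to games containing fresh players, so the appeal to fresh names in $\mathcal{U}$ does not deliver a consistent terminating scheme. You have turned the proof into a hard (and possibly unsolvable) extension problem.

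The paper avoids this entirely by exploiting the fact that condition (2) of Definition~\ref{def efop_coalition} is \emph{local}: it only asks that (RBCC-$v$) be preserved game by game, so the auxiliary solution need only satisfy (RBCC) at the particular $v$ under consideration, not everywhere. Concretely, for $|\mathcal{P}|=1$ the paper defines $f''$ to equal the constant $f(N,v,\mathcal{P})$ on the $v$-slice (and to equal $f(N,v',\mathcal{P})$ when the partition is $\mathcal{P}$, so that (WEES) applies); being constant across partitions at $v$, $f''$ satisfies (RBCC-$v$) trivially, and Theorem~\ref{f-ESS_coalition} is then applied only at $v$. Moreover, for $|\mathcal{P}|\neq 1$ no RBCC machinery is needed at all: one takes $f'$ constant and equal to $f_i$ on the component $C$ containing $i$ (compensating the total on the other components, which exist since $|\mathcal{P}|\neq1$), and combines (ET), (WEES) and the per-component surplus formula that (E) and ($f$-FDSC) already force. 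I would also flag a small imprecision in your sufficiency argument: the ESS offsets at the reduced games along a cycle are \emph{not} identical to one another; the cancellation occurs because both sides of the cycle identity sum over the same multiset of removed players.
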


Independence of the axioms of Theorem \ref{ESS coalition} is shown in Appendix \ref{appendix: independence_coalition}.
Together with the results for communication games, Theorem \ref{efficient-fair main}, this theorem confirms that our operator-based approach unifies efficiency-restoring extensions across distinct enriched models of TU-games.

We remark that (RBCC) is not the only possible way to incorporate the ESS operator within this framework. Identifying other, possibly weaker, versions of (RBC) that could lead to the same result is beyond the scope of this paper, but remains an interesting direction for future research. See also the discussion of related issues in Subsection \ref{subsec:preserve}.

\section{Discussions}\label{sec_conclusingremarks}

In this paper, we have introduced a novel framework for justifying efficient solutions in TU-games, focusing on cases in which such solutions are obtained by modifying initially inefficient ones.
Our approach differs from the standard axiomatic method that characterizes a single solution via a fixed set of axioms. 
Instead, we study operators that transform one solution into another, thereby providing a new perspective on the structure and interrelation of solution concepts.

We discuss several promising directions for future research.
These include extensions to other classes of operators, examining the compatibility of efficiency with alternative fairness axioms, extending our approach to set-valued solutions, and exploring applications to broader classes of resource-allocation problems beyond TU-games.

\subsection{Other classes of operators}\label{sec_otherop}

Throughout the paper, we have focused on the characterization of the ESS and PS operators, as these two methods are most commonly employed.
Theorem \ref{operator main result} shows that these two methods share a common mathematical structure—summarized by (ET) and (EES)—and differ only in the situational axiom.
A natural next step is to explore how other extension operators can be characterized by altering these situational axioms.
While pursuing this direction is beyond the scope of the present paper, we briefly discuss possible extensions.

Recall that the PS value can also be written as
\[
\varphi_i(v)=f_i(v)+\frac{f_i(v)}{\sum_{k \in N}f_k(v)}\left(v(N)-\sum_{k \in N}f_k(v)\right).
\]
Accordingly, we may define a more general class of operators that encompasses both the ESS and PS operators: for any $f \in \mathcal{F}$, $i \in N$, and $v \in \mathcal{V}^N$,
\[
\Phi_i(f)(v)=f_i(v)+w_i(f(v))\left(v(N)-\sum_{k \in N}f_k(v)\right),
\]
where $w_i(f(v))=w_j(f(v))$ whenever $f_i(v)=f_j(v)$ and $\sum_{k \in N}w_k(f(v))=1$.
This operator also satisfies both (ET) and (EES).
A special case is a convex combination of the ESS and PS operators, given by $w_i(f(v))=\alpha/n+(1-\alpha)\frac{f_i(v)}{\sum_{k \in N}f_k(v)}$.
Convex combinations of solutions in TU-games have been widely studied \citep[e.g.,][]{vandenBrinketal2013, CasajusHuettner2014JET, yokotefunaki2017, Nakada2024}.
Hence, the axioms and techniques developed in those studies would be useful for extending our characterization results in this direction.

We may also consider an alternative approach that relaxes (ET) to accommodate asymmetry among players.
For instance, the condition $w_i(f(v))=w_j(f(v))$ may not hold even when $f_i(v)=f_j(v)$.
In line with studies on convex combinations of solutions in TU-games, \citet{abe2019weighted} provide a theoretical foundation for incorporating such asymmetries.
Their technique could prove valuable in pursuing such generalizations.

\subsection{Preservation of axioms}\label{subsec:preserve}

As discussed in Section \ref{sec communication game}, the concept of an efficient extension operator provides a useful framework for justifying common efficiency-adjustment or normalization methods in solution concepts, even in the presence of communication structures that restrict feasible coalitions.
In particular, following Myerson’s original fairness axiom, we examined extension operators that satisfy both efficiency and fairness—referred to as efficient-fair extension operators.
However, there can be trade-offs between efficiency and other fairness-like axioms.
\cite{beal2012fairness} propose alternative notions of fairness in communication games and provide axiomatic foundations for corresponding solution concepts.
Building on these results, \cite{bealetal2018} investigate whether efficient extensions can respect such alternative fairness axioms.
Their findings reveal that, in certain cases, some fairness-like axioms are incompatible with efficiency, implying that an extension operator preserving both properties may not exist.
In light of these impossibility results, identifying which axioms can be consistently preserved through efficient extension remains an important direction for future research.

\subsection{Set-valued solutions}

In this paper, we have restricted our attention to single-valued solutions and their efficient extensions.
A natural direction for future work is to extend the framework to set-valued solutions.
In particular, one may begin with either a single-valued or a set-valued solution and seek an efficient extension of it in the set-valued sense.
For instance, by restricting attention to balanced games \citep{shapley1971cores}, one could examine how a given solution concept extends to the Core. 

\subsection{Beyond TU-games}

From a technical standpoint, our main results build on the methodological framework developed by \citet{Casajus2015}, who characterized simple lump-sum tax redistribution rules via efficiency, monotonicity, and symmetry.
His model is mathematically equivalent to additive TU-games, where the resulting solution corresponds to the egalitarian Shapley value.
In this sense, his setting can be interpreted as a special application within TU-game theory.
At the same time, his approach is sufficiently flexible to encompass a wide range of resource-allocation models where individual payoffs and aggregate resources are explicitly specified.

Specifically, the monotonicity axiom introduced by \citet{Casajus2015} depends only on individual incomes and their aggregate total. 
Consequently, the resulting allocation rule can be expressed as a function of these two parameters.
Subsequent studies have explored taxation rules for non-negative incomes \citep{casajus2015monotonic, casajus2016differentially, yokote2017weak, martinez2022laissez, zou2023axiomatizations}, as well as asymmetric cases \citep{abe2017monotonic}.\footnote{The same model originates from \cite{ju2007non}, who analyze redistribution rules based on non-manipulability within subgroups.}
These generalizations are still developed within the additive TU-game framework, and thus the broader technical potential of \citet{Casajus2015}’s approach has yet to be fully explored.\footnote{A few exceptions include \cite{chambers2021bilateral} and \cite{martinez2024redistribution}, both of which extend redistribution problems by introducing heterogeneity in basic needs.}

Our proofs for the characterizations of the ESS and PS operators leverage the technique of \citet{Casajus2015}, encapsulated in the condition (EES), within a more general framework that transforms one allocation rule into another.
Therefore, while our analysis has been situated within TU-games and has focused on efficient extensions of their solution concepts, the key logic underlying our argument is applicable to broader environments beyond TU-games.
Exploring efficient extensions in such general settings would be a fruitful avenue for future research.

\section{Concluding remarks}\label{sec_conclude}

This paper develops a general framework for transforming inefficient solutions in TU-games into efficient ones through the notion of an efficient extension operator.
By shifting the focus from the characterization of individual solutions to the characterization of mappings between solutions, the paper establishes a new methodological foundation for analyzing how efficiency interacts with existing normative principles in economic allocation problems.
This operator-based approach unifies and generalizes well-known efficiency-adjustment methods such as the egalitarian and proportional extensions, providing new axiomatic justifications for a broad class of efficiency-adjusted values.
Beyond these specific applications, the framework clarifies the logical connections among solution concepts and offers a versatile tool for constructing and evaluating new ones.

While the analysis has centered on TU-games, the underlying logic—transforming a baseline rule into an efficient one through universally defined axioms—extends naturally to more general allocation environments, including redistribution, taxation, and mechanism design problems.
This broader applicability highlights the potential of the operator-based approach as a unifying language for reasoning about efficiency-adjustment mechanisms across different domains of economic theory.

Future research may further explore how the axiomatic properties of operators interact with other normative or behavioral principles, thereby enriching the study of efficiency adjustments beyond the cooperative game framework.
We hope that this perspective will serve as a foundation for further theoretical developments and applications in cooperative game theory and related areas of economic design.
\\

\textbf{Declaration of conflict of interest}: None.

\begin{center}
\Large{{\bf Appendix}}
\end{center}

\numberwithin{definition}{section}
\numberwithin{theorem}{section}
\numberwithin{lemma}{section}
\numberwithin{proposition}{section}
\numberwithin{corollary}{section}
\numberwithin{example}{section}
\renewcommand{\theequation}{\thesection.\arabic{equation}}
\appendix

\section{Proofs in Section \ref{sec_main}}\label{appendix_proof_main}

\begin{proof}[Proof of Theorem \ref{operator main result}]
(1) The if part is trivial. We show the only if part.
Let $\Phi$ be an efficient extension operator that satisfies (ET), (EES), and suppose that $\Phi(f) \in \mathcal{F}$ satisfies ($f$-IES) for any $f \in \mathcal{F}$.

Take any $f \in \mathcal{F}$, $i \in N$, and $v \in \mathcal{V}^N$.
By ($f$-IES) for each $f$, we can write $\Phi: \mathcal{F}\rightarrow \mathcal{F}$ as $\Phi_i(f)(v)=\gamma^{f}_i\bigl( f_i(v), v(N)-\sum_{k \in N}f_k(v)\bigr)$, where $\gamma^f_i: \mathbb{R}^2\rightarrow \mathbb{R}$.
Note that the function $\gamma_i^f$ depends on $f$, but does not depend on $v$.
Moreover, by (EES), for any $f' \in \mathcal{F}$ and $v \in \mathcal{V}^N$ with $f_i(v)=f'_i(v)$ and $\sum_{k \in N}f_k(v)=\sum_{k \in N}f'_k(v)$, 
\begin{align*}
\gamma^f_i\bigl( f_i(v), v(N)-\sum_{k \in N}f_k(v) \bigr)&=\Phi_i(f)(v)\\
&=\Phi_i(f')(v)\\
&=\gamma^{f'}_i\bigl( f'_i(v), v(N)-\sum_{k \in N}f'_k(v) \bigr)\\
&=\gamma^{f'}_i\bigl( f_i(v), v(N)-\sum_{k \in N}f_k(v) \bigr),
\end{align*}
which implies that $\gamma^f_i=\gamma^{f'}_i$ for any $f, f' \in \mathcal{F}$.
Therefore, there exists  $\gamma_i:\mathbb{R}^2\rightarrow \mathbb{R}$ such that 
\begin{equation}
\Phi_i(f)(v)\equiv \gamma_i\bigl( f_i(v), v(N)-\sum_{k \in N}f_k(v) \bigr),  \forall f \in \mathcal{F}, \forall v \in \mathcal{V}^N. \label{proof_function}
\end{equation}

Now, fix arbitrary $f \in \mathcal{F}$, $i \in N$ and $v \in \mathcal{V}^N$, and let us consider the following solution $f' \in \mathcal{F}$:
\[
f'_j(v')=f_i(v), \forall j \in N, \forall v' \in \mathcal{V}^N.
\]
By construction, $f'_i=f'_j$ for any $i,j \in N$, so that $\Phi_i(f')=\Phi_j(f')$ for any $i,j \in N$ by (ET).
Since $\Phi(f') \in \mathcal{F}$ satisfies (E), 
\begin{equation}
\Phi_i(f')(v')=\frac{v'(N)}{n},\forall i \in N, \forall v' \in \mathcal{V}^N. \label{proof_ET}
\end{equation}

Next, let us consider the game $\tilde v \in \mathcal{V}^N$ such that $\tilde v(N)=nf_i(v)+\bigl( v(N)-\sum_{k \in N}f_k(v)\bigr)$.
Then, by construction, we have $f'_i(\tilde v)=f_i(v)$ and $\tilde v(N)-\sum_{k \in N}f'_k(\tilde v)=\tilde v(N)-nf_i(v)= v(N)-\sum_{k \in N}f_k(v)$.
Therefore, by \eqref{proof_function} and \eqref{proof_ET}, we have
\begin{align*}
\Phi_i(f)(v)&=\gamma_i\Bigl( f_i(v), v(N)-\sum_{k \in N}f_k(v)\Bigr)\\
&=\gamma_i\Bigl( f'_i(\tilde v), \tilde v(N)-\sum_{k \in N}f'_k(\tilde v)\Bigr)\\
&=\Phi_i(f')(\tilde v)\\
&=\frac{\tilde v(N)}{n}\\
&= f_i(v)+\frac{1}{n} \left( v(N)-\sum_{k \in N}f_k(v) \right).
\end{align*}
Since $f \in \mathcal{F}$, $i \in N$, and $v \in \mathcal{V}^N$ are arbitrary chosen, $\Phi$ is the ESS operator.

(2) Take any $f \in \mathcal{F}_+$, $i \in N$, and $v \in \mathcal{V}_+^N$.
The proof is similar to that of (1), but we need some modifications. First, modify the domains to $\mathcal{F}_+$ and $\mathcal{V}^N_+$. Second, modify \eqref{proof_function} using (EES) and ($f$-IER) as follows: 
\begin{equation}
\Phi_i(f)(v)=\hat \Phi_i\Bigr(f_i, \sum_{k \in N}f_i \Bigr)(v)=\gamma_i \Bigl(f_i(v), \frac{v(N)}{\sum_{k \in N}f_k(v)} \Bigr), \forall v \in \mathcal{V}^N_+. \label{proof_function2}
\end{equation}
We can also have \eqref{proof_ET} in the modified domains by the same discussions for arbitrary fixed $f \in \mathcal{F}_+$ and $v \in \mathcal{V}^N_+$.
Let us consider the game $\hat v$ such that $\hat v(N)=nf_i(v)v(N)/\sum_{k \in N}f_k(v)$.
Then, by construction, we have $f'_i(\hat v)=f_i(v)$ and $\hat v(N)/\sum_{k \in N}f'_k(\hat v)=\hat v(N)/(nf_i(v))=v(N)/\sum_{k \in N}f_k(v)$.
Therefore, by \eqref{proof_ET} and \eqref{proof_function2}, we have
\begin{align*}
\Phi_i(f)(v)&=\gamma_i\Bigl( f_i(v), \frac{v(N)}{\sum_{k \in N}f_k(v)}\Bigr)\\
&=\gamma_i\Bigl( f'_i(\hat v), \frac{\hat v(N)}{\sum_{k \in N}f'_k(\hat v)}\Bigr)\\
&= \Phi_i(f')(\hat v)\\
&=\frac{\hat v(N)}{n}\\
&=\frac{f_i(v)}{\sum_{k \in N}f_k(v)}v(N).
\end{align*}
Since $f \in \mathcal{F}_+$, $i \in N$, and $v \in \mathcal{V}_+^N$ are arbitrary chosen, $\Phi$ is the PS operator.
\end{proof}

\section{Cohesively efficient extension} \label{Appendix_cohesiveefficiency}

\citet{beal2021cohesive} introduced the concept of \textit{cohesive efficiency}, which requires that a solution achieve the maximum total worth that may arise from smaller coalitions acting independently. 
Formally, the concept is defined as follows.

\begin{itemize}
    \item[] \textbf{Cohesive efficiency (CoE)}: For any $v \in \mathcal{V}^N$, $\sum_{i \in N}\varphi_i(v)=\max_{\mathcal{P} \in \Pi^N}\sum_{T \in \mathcal{P}}v(T)$.
\end{itemize}
Note that (CoE) and (E) coincide at $v$ if and only if $\mathcal{P}=\{N\}$ is the maximizer of the right-hand side.
We define \textit{cohesively efficient extension operator} as follows.

\begin{definition}
Let $\mathcal{D} \subseteq \mathcal{F}$ be the set of solutions.
An operator $\Phi: \mathcal{D}\rightarrow \mathcal{F}$ is a cohesively efficient extension operator in $\mathcal{D}$ if, for any $f \in \mathcal{D}$, $\Phi(f)$ satisfies (CoE).
\end{definition}
To accommodate (CoE) based on Theorem \ref{operator main result}, we consider the following modification of situational axioms.

\begin{itemize}
\item[] \textbf{$f$-individualistic property for equal cohesive surplus ($f$-IECoS)}: 
For any $v,w \in  \mathcal{V}^N$ and $i \in N$, if 
\[
\max_{\mathcal{P} \in \Pi^N}\sum_{T \in \mathcal{P}}v(T)-\sum_{k \in N}f_k(v)=\max_{\mathcal{P} \in \Pi^N}\sum_{T \in \mathcal{P}}w(T)-\sum_{k \in N}f_k(w)~~\text{and}~~f_i(v)=f_i(w),
\]
then $\varphi_i(v)=\varphi_i(w)$.

\item[] \textbf{$f$-individualistic property for equal cohesive ratio ($f$-IECoR)}: 
For any $v,w \in  \mathcal{V}^N_{+}$ and $i \in N$, if 
\[
\frac{\max_{\mathcal{P} \in \Pi^N}\sum_{T \in \mathcal{P}}v(T)}{\sum_{k \in N}f_k(v)}=\frac{\max_{\mathcal{P} \in \Pi^N}\sum_{T \in \mathcal{P}}w(T)}{\sum_{k \in N}f_k(w)}~~\text{and}~~f_i(v)=f_i(w),
\]
then $\varphi_i(v)=\varphi_i(w)$.

\end{itemize}

By the same logic for the proof of Theorem \ref{operator main result}, we can obtain the following characterizations of cohesively efficient extension operators.

\begin{corollary}\label{cohesive_op}
\begin{itemize}
\item[] 
\item[(1)] An cohesively efficient extension operator $\Phi: \mathcal{F} \rightarrow \mathcal{F}$ satisfies (ET) and (EES), and $\Phi(f)$ satisfies ($f$-IECoS) for any $f \in \mathcal{F}$ if and only if
\[
\Phi_i(f)(v)=f_i(v)+\frac{1}{n} \left( \max_{\mathcal{P} \in \Pi^N}\sum_{T \in \mathcal{P}}v(T)-\sum_{k \in N}f_k(v) \right).
\]
for any $f \in \mathcal{F}$, $v \in \mathcal{V}^N$ and $i \in N$.

\item[(2)] An efficient extension operator $\Phi: \mathcal{F}_+ \rightarrow \mathcal{F}_+$ satisfies (ET) and (EES), and  $\Phi(f)$ satisfies ($f$-IECoR) for any $f \in \mathcal{F}_+$ if and only if 
\[
\Phi_i(f)(v)=\frac{f_i(v)}{\sum_{k \in N}f_k(v)}\Bigl(\max_{\mathcal{P} \in \Pi^N}\sum_{T \in \mathcal{P}}v(T)\Bigr).
\]
for any $f \in \mathcal{F}_+$, $v \in \mathcal{V}^N_{+}$ and $i \in N$,
\end{itemize}
\end{corollary}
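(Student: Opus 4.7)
The plan is to transpose the proof of Theorem~\ref{operator main result} wholesale, replacing the grand-coalition worth $v(N)$ by the cohesive value $M(v) := \max_{\mathcal{P} \in \Pi^N}\sum_{T \in \mathcal{P}} v(T)$ at every occurrence. Since (ET) and (EES) are unchanged and (CoE) plays exactly the role that (E) did in pinning down the constant solution, the logical skeleton transposes verbatim; only the accessibility of the new surplus/ratio values needs a short extra check.

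For part (1), I first invoke ($f$-IECoS) to write $\Phi_i(f)(v) = \gamma_i^f\bigl(f_i(v),\, M(v) - \sum_{k \in N} f_k(v)\bigr)$ for some $\gamma_i^f : \mathbb{R}^2 \to \mathbb{R}$, and then use (EES) across pairs $f, f'$ that agree in $f_i$ and $\sum_k f_k$ at a given $v$ to drop the dependence on $f$, obtaining a single $\gamma_i$. Next, for fixed $f, i, v$, I introduce the constant solution $f'_j(v') \equiv f_i(v)$ for all $j, v'$; (ET) forces all coordinates $\Phi_j(f')$ to agree, and (CoE) fixes them at $M(v')/n$. Choosing $\tilde v$ with $M(\tilde v) = n f_i(v) + M(v) - \sum_k f_k(v)$, the pair $(f'_i(\tilde v),\, M(\tilde v) - \sum_k f'_k(\tilde v))$ equals $(f_i(v),\, M(v) - \sum_k f_k(v))$, so
\[
\Phi_i(f)(v) = \gamma_i\bigl(f_i(v),\, M(v) - \sum_k f_k(v)\bigr) = \Phi_i(f')(\tilde v) = \frac{M(\tilde v)}{n} = f_i(v) + \frac{1}{n}\bigl(M(v) - \sum_k f_k(v)\bigr),
\]
which is the claimed formula. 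Part (2) proceeds identically but with the multiplicative surplus $M(v)/\sum_k f_k(v)$ replacing the additive one, and with $\hat v$ chosen so that $M(\hat v) = n f_i(v) \cdot M(v)/\sum_k f_k(v)$; the chain then yields $\Phi_i(f)(v) = M(\hat v)/n = \bigl(f_i(v)/\sum_k f_k(v)\bigr) M(v)$.

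The main technical obstacle is guaranteeing the existence of games with a prescribed cohesive value, since unlike $v(N)$, the functional $M$ is a maximum over partitions and need not equal any single coordinate of $v$. I would handle this with the linear construction $\tilde v(S) := \beta |S|/n$: every partition $\mathcal{P}$ yields $\sum_{T \in \mathcal{P}} \tilde v(T) = \beta \bigl(\sum_{T} |T|\bigr)/n = \beta$, so $M(\tilde v) = \beta$ for any prescribed $\beta \in \mathbb{R}$. For part (2) one additionally needs $\tilde v \in \mathcal{V}^N_+$ and $f' \in \mathcal{F}_+$, both of which follow from the same positivity inputs that support the original proof of Theorem~\ref{operator main result}(2), since $\sum_k \tilde v(\{k\}) = \beta$ is positive whenever the target cohesive ratio is.
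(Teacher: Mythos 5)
Your proposal matches the paper's own proof, which is exactly this transposition of the argument for Theorem~\ref{operator main result}: replace $v(N)$ by $\max_{\mathcal{P}\in\Pi^N}\sum_{T\in\mathcal{P}}v(T)$ throughout, replace ($f$-IES)/($f$-IER) by ($f$-IECoS)/($f$-IECoR), and rerun the same chain through the constant solution $f'$ and an auxiliary game $\tilde v$ with prescribed cohesive worth. Your explicit construction $\tilde v(S)=\beta|S|/n$, which makes every partition attain the same total $\beta$, supplies a realizability detail that the paper leaves implicit (it merely posits a game with $\tilde v(\mathcal{P}^*(\tilde v))$ equal to the target), so your write-up is, if anything, slightly more complete than the paper's.
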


As special cases of the expression defined in (1) of Corollary \ref{cohesive_op}, if $f=Sh$, the stand-alone value (i.e, $f_i(v)=v(\{i\})$), or the equal division value (i.e, $f_i(v)=v(N)/n$), then the corresponding solutions coincide with the cohesive Shapley value, the cohesive equal surplus division value, and the cohesive equal division value, respectively, analyzed by \citet{beal2021cohesive}.

\begin{proof}[Proof of Corollary \ref{cohesive_op}]
For each $v \in \mathcal{V}^N$ (resp. $\mathcal{V}^N_+$), by abuse of notation, we write $v(\mathcal{P}^*(v))\equiv\max_{\mathcal{P} \in \Pi^N}\sum_{T \in \mathcal{P}}v(T)$, where $\mathcal{P}^*(v)$ is a maximizer of the right-hand side.

(1) Take any $f \in \mathcal{F}$, $i \in N$, and $v \in \mathcal{V}^N$.
By the same arguments in the proof for (1) of Theorem \ref{operator main result} replacing ($f$-IES) with ($f$-IECoS), we have

\begin{equation}
\Phi_i(f)(v)\equiv \gamma_i\bigl( f_i(v), v(\mathcal{P}^*(v))-\sum_{k \in N}f_k(v) \bigr),  \forall f \in \mathcal{F}, \forall v \in \mathcal{V}^N. \label{proof_function_cohesive}
\end{equation}
and 
\begin{equation}
\Phi_i(f')(v')=\frac{v'(\mathcal{P}^*(v'))}{n},\forall i \in N, \forall v' \in \mathcal{V}^N, \label{proof_ET_cohesive}
\end{equation}
where $f'_j(v')=f_i(v)$ for any $j \in N$ and $v' \in \mathcal{V}^N$.
Then, let us define the game $\tilde v \in \mathcal{V}^N$ such that $\tilde v(\mathcal{P}^*(\tilde v))=nf_i(v)+\bigl( v(\mathcal{P}^*(v))-\sum_{k \in N}f_k(v) \bigr)$.
By \eqref{proof_function_cohesive} and \eqref{proof_ET_cohesive}, we can apply the same logic for the proof of Theorem \ref{operator main result} and obtain the result.

(2) By the similar modifications in (1) in the proof for (2) of Theorem \ref{operator main result} replacing ($f$-IER) with ($f$-IECoR), we can obtain the result.
\end{proof}

\section{Proofs in Section \ref{sec communication game}}\label{Appendix_proof_communication}
To prove Theorem \ref{f-ESS network 1}, the following Lemma is useful.

\begin{lemma}\label{E+FDS}
Suppose that a solution $\varphi$ satisfies (E) and ($f$-FDS) for some $f\in \mathcal{F}_{\mathscr{G}}$.
Then, for any $(v,g) \in \mathcal{V}^N \times \mathscr{G}^N$ and any $C \in N/g$, it satisfies
\[
\sum_{i \in C}\varphi_i(v,g)=\sum_{i \in C}f_i(v,g)+\frac{|C|}{n}\left( v(N)-\sum_{k \in N}f_k(v,g)  \right).
\]
\end{lemma}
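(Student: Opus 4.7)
The plan is to exploit the fact that $(f$-FDS$)$ forces the per-capita surplus relative to $f$ to be the same constant across all components, and then use (E) to pin down that constant.

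First I would fix $(v,g)$ and let $\{C_1,\ldots,C_m\}=N/g$ denote the partition of $N$ into components. By $(f$-FDS$)$, the quantity
\[
\alpha(C)\;:=\;\frac{1}{|C|}\sum_{i\in C}\bigl(\varphi_i(v,g)-f_i(v,g)\bigr)
\]
is independent of the choice of component $C\in N/g$. Call this common value $\alpha$; thus $\sum_{i\in C}(\varphi_i(v,g)-f_i(v,g))=|C|\,\alpha$ for every component $C$.

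Next I would sum this identity over all components. Since the components partition $N$ and $\sum_{C\in N/g}|C|=n$, we obtain
\[
\sum_{i\in N}\varphi_i(v,g)-\sum_{i\in N}f_i(v,g)\;=\;n\,\alpha.
\]
Applying (E) to the left-hand side yields $v(N)-\sum_{k\in N}f_k(v,g)=n\,\alpha$, hence $\alpha=\tfrac{1}{n}\bigl(v(N)-\sum_{k\in N}f_k(v,g)\bigr)$. Substituting this value of $\alpha$ back into $\sum_{i\in C}(\varphi_i(v,g)-f_i(v,g))=|C|\,\alpha$ and rearranging gives the stated formula.

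There is no real obstacle here: the argument is just a two-line consequence of $(f$-FDS$)$ and (E), combined with the fact that components partition $N$. The only mild subtlety is recognizing that $(f$-FDS$)$, applied across all pairs of components, is equivalent to the existence of a component-independent constant $\alpha$, which is then identified via efficiency.
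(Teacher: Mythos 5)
Your proof is correct and follows essentially the same route as the paper: both arguments use ($f$-FDS) to equate the per-capita surpluses across components, sum over the partition $N/g$, and invoke (E) to identify the common constant. Your presentation via the explicit constant $\alpha$ is just a slightly tidier packaging of the paper's direct substitution.
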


\begin{proof}
Fix $f \in \mathcal{F}_{\mathscr{G}}$ and suppose that $\varphi$ satisfies (E) and ($f$-FDS).
Take any $(v,g) \in \mathcal{V}^N \times \mathscr{G}^N$ and any $C \in N/g$.
Then, by (E), we have
\[
\sum_{i \in C}\left(\varphi_i(v,g)-f_i(v,g) \right)+\sum_{C' \in N/g; C' \neq C}\sum_{j \in C'}\left(\varphi_j(v,g)-f_j(v,g) \right)=v(N)-\sum_{k \in N}f_k(v,g).
\]
Moreover, by ($f$-FDS), for each $C'\in N/g$ with $C' \neq C$, it satisfies
\[
\left( \sum_{j \in C'}\varphi_j(v, g)-f_j(v, g) \right)=\frac{|C'|}{|C|}\left( \sum_{i \in C}\varphi_i(v, g)-f_i(v, g) \right).
\]
Therefore, we have
\begin{eqnarray*}
&&\sum_{i \in C}\left(\varphi_i(v,g)-f_i(v,g) \right)+\sum_{C' \in N/g; C' \neq C}\sum_{j \in C'}\left(\varphi_j(v,g)-f_j(v,g) \right)\\
&=& \Bigl(1+ \sum_{C' \in N/g; C' \neq C} \frac{|C'|}{|C|}\Bigr)\left( \sum_{i \in C}\varphi_i(v, g)-f_i(v, g) \right)\\
&=& \frac{n}{|C|}\left( \sum_{i \in C}\varphi_i(v, g)-f_i(v, g) \right)\\
&=& v(N)-\sum_{k \in N}f_k(v,g)\\
&\Leftrightarrow & \sum_{i \in C}\varphi_i(v,g)=\sum_{i \in C}f_i(v,g)+\frac{|C|}{n}\left(v(N)-\sum_{k \in N}f_k(v,g)  \right).
\end{eqnarray*}
\end{proof}

\begin{proof}[Proof of Theorem \ref{f-ESS network 1}]
The if part is trivial. We show the only if part. 
Let $f\in \mathcal{F}_{\mathscr{G}}$ be a solution that satisfies (FA) and suppose that a solution $\varphi$ satisfies (E), (FA), and ($f$-FDS).
Take any $(v,g) \in \mathcal{V}^N \times \mathscr{G}^N$. 
We show the sufficiency of the result by the induction of the number of links $|g|$. 

Suppose that $|g|=0$.
Then, by Lemma \ref{E+FDS}, $\varphi(v,g)$ is uniquely determined by
\[
\varphi_i(v,g)=f_i(v,g)+\frac{1}{n}\left(v(N)-\sum_{k \in N}f_k(v,g)  \right)
\]
for any $i \in N$.

Suppose that $\varphi(v,g)$ is determined for any $|g| \le k$ with $k \in \bigl\{0,1, \ldots, \binom{n}{2}-1 \bigr\}$.
Let $g$ with $|g|=k+1$ and take any component $C \in N/g$ and $i \in C$.
Then, by (FA) and the connectedness of $C$, for any $j \neq i$, there are $i_1, \ldots, i_k \in C$ such that $i_1=i, i_k=j$,$i_{h+1} \in N_{i_h}(g)$ for any $h=1,\ldots, k-1$, and 
\[
\varphi_i(v, g)-\varphi_j(v, g)=\sum_{h=1}^{k-1}\Bigl( \varphi_{i_h}(v, g-i_hi_{h+1})-\varphi_{i_{h+1}}(v, g-i_hi_{h+1}) \Bigr).
\]
Note that the right-hand side of the above equation is a uniquely determined constant by the induction hypothesis.
Therefore, by combining Lemma \ref{E+FDS} with the above observations, we can obtain the $|C|$ linearly independent equations with $|C|$ unknown, which uniquely determines $\varphi_i(v,g)$ for any $i \in C$.
Since $C \in N/g$ is arbitrarily chosen, this argument shows that $\varphi(v,g)$ is uniquely determined for $g$ with $|g|=k+1$.
\end{proof}

\begin{proof}[Proof of Theorem \ref{efficient-fair main}]
The if part is trivial.
We show the only if part.
Let  $\Phi$ be an efficient-fair extension operator that satisfies all the axioms.

Take any $f \in \mathcal{F}_{\mathscr{G}}$, $g \in \mathscr{G}^N$, and $i \in N$. 
Fix any $j \in N\setminus\{i\}$, and consider the following solution $\hat f \in \mathcal{F}_{\mathscr{G}}$: for any
$(v,g') \in \mathcal{V}^N \times \mathscr{G}^N$ and $l\in N$, let
\[
\hat f_l(v,g')
=
\begin{cases}
f_i(v,g), & \text{if } l=i,\\
\displaystyle\sum_{k\in N\setminus\{i\}}f_k(v,g),
& \text{if } l=j,\\
0, & \text{otherwise}.
\end{cases}
\]
That is, player $i$ receives $f_i(v,g)$, the residual total
$\sum_{k\in N\setminus\{i\}}f_k(v,g)$ is assigned to player $j \neq i$, and all remaining players receive zero. 
Since $\hat f$ is independent of $g'$, it satisfies (FA).

Since $\Phi$ is an efficient-fair extension operator, $\Phi(\hat f)$
satisfies (E) and (FA). Moreover, $\Phi(\hat f)$ satisfies
($f$-FDS) by assumption. 
Hence, by
Theorem~\ref{f-ESS network 1},
\[
\Phi_i(\hat f)(v,g)
=
\hat f_i(v,g)
+
\frac{1}{n}
\left(
v(N)-\sum_{k\in N}\hat f_k(v,g)
\right).
\]
By construction,
\[
\hat f_i(\cdot,g)=f_i(\cdot,g)
\quad\text{and}\quad
\sum_{k\in N}\hat f_k(\cdot,g)
=
\sum_{k\in N}f_k(\cdot,g).
\]
Therefore, the premise of (WEES) is satisfied for $f$ and $\hat f$ at $g$.
It follows that
\[
\begin{aligned}
\Phi_i(f)(v,g)
&=
\Phi_i(\hat f)(v,g)\\
&=
f_i(v,g)
+
\frac{1}{n}
\left(
v(N)-\sum_{k\in N}f_k(v,g)
\right).
\end{aligned}
\]
Since $f$, $g$, $i$, and $v$ were arbitrarily chosen, $\Phi$ is the ESS
operator.
\end{proof}

\section{Proofs in Section \ref{sec_coalition}}\label{Appendix_proof_coalition}

\begin{proof}[Proof of Theorem \ref{f-ESS_coalition}]
The if part is trivial. We show the only if part.
Let $f \in \mathcal{F}_{\mathcal{C}}$ be a solution that satisfies (RBCC) and suppose that a solution $\varphi$ satisfies (E), (RBCC), ($f$-EGN), and ($f$-FDSC).
By the same argument as Lemma \ref{E+FDS}, we can see that
\begin{equation}
\sum_{i \in C}\varphi_i(N,v, \mathcal{P})=\sum_{i \in C}f_i(N, v,\mathcal{P})+\frac{|C|}{n} \left( v(N)-\sum_{k \in N}f_ k(N,v,\mathcal{P})   \right), \label{eq. eff coalition}
\end{equation}
for any $(N, v, \mathcal{P}) \in \mathcal{CV}$ and $C \in \mathcal{P}$. 
We show that $\varphi$ is uniquely determined for any $(N, v, \mathcal{P}) \in \mathcal{CV}$.

Take any $(N, v, \mathcal{P}) \in \mathcal{CV}$ and $C \in \mathcal{P}$.
If $|C|=1$,  for $i \in C$, $\varphi_i(N, v, \mathcal{P})$ is uniquely determined by the equation (\ref{eq. eff coalition}).
Suppose that $|C|=k$ for some $2 \le k \le n$.
Let us consider a game $(N', w)$ such that $w(S)=v(S\setminus \{n'\})$ for any $S \subseteq N'=N \cup \{n'\}$with $n' \in \mathcal{U}\setminus N$.
Note that $w|_N=v$ and $n'$ is a null player in $w|_{N'\setminus \{j\}}$ for any $j \in N$.
Let $\mathcal{P'}=(\mathcal{P}\setminus \{C\}) \cup \{C \cup \{n'\}\}$.
Then, by applying (RBCC) to $ \{C \cup \{n'\}\} \in \mathcal{P'}$ in the game $(N', w, \mathcal{P'})$ with an order $(i_1, \ldots, i_s, n', i_{s+1}, \ldots, i_{|C|})$, we have
\begin{eqnarray*}
&&\varphi_{i_1}(N'\setminus \{i_{|C|}\}, w, \mathcal{P'}_{-{i_{|C|}}})+\cdots+\varphi_{i_s}(N'\setminus \{i_{s-1}\}, w, \mathcal{P'}_{-{i_{s-1}}})+\varphi_{n'}(N'\setminus \{i_{s}\}, w, \mathcal{P'}_{-{i_{s}}})\\
&&~~~~~~+\varphi_{i_{s+1}}(N'\setminus \{n'\}, w, \mathcal{P'}_{-{n'}})+\cdots+\varphi_{i_{|C|}}(N'\setminus \{i_{|C|-1}\}, w, \mathcal{P'}_{-{i_{|C|-1}}})\\
&=&\varphi_{i_1}(N'\setminus \{i_{2}\}, w, \mathcal{P'}_{-{i_{2}}})+\cdots+\varphi_{i_s}(N'\setminus \{n'\}, w, \mathcal{P'}_{-{n'}})+\varphi_{n'}(N'\setminus \{i_{s+1}\}, w, \mathcal{P'}_{-{i_{s+1}}})\\
&&~~~~~~+\varphi_{i_{s+1}}(N'\setminus \{i_{s+2}\}, w, \mathcal{P'}_{-{i_{s+2}}})+\cdots+\varphi_{i_{|C|}}(N'\setminus \{i_1\}, w, \mathcal{P'}_{-{i_1}}).
\end{eqnarray*}

Moreover, by ($f$-EGN), the direct calculation shows that\footnote{Note that logic of this calculation is the same as the one demonstrated just after Theorem \ref{f-ESS_coalition}.}
\begin{eqnarray*}
&&\varphi_{i_{s+1}}(N, v, \mathcal{P})-\varphi_{i_s}(N, v, \mathcal{P})\\
&=&\varphi_{i_{s+1}}(N'\setminus \{n'\}, w, \mathcal{P'}_{-{n'}})-\varphi_{i_s}(N'\setminus \{n'\}, w, \mathcal{P'}_{-{n'}})\\
&=&\sum_{l=1; l \neq s}^{|C|}\left( \varphi_{i_l}(N'\setminus \{i_{l+1}\}, w, \mathcal{P'}_{-i_{l+1}})-\varphi_{n'}(N'\setminus \{i_{l+1}\}, w, \mathcal{P'}_{-i_{l+1}}) \right)\\
&&-\sum_{l=1; l \neq s+1}^{|C|}\left( \varphi_{i_l}(N'\setminus \{i_{l-1}\}, w, \mathcal{P'}_{-i_{l-1}})-\varphi_{n'}(N'\setminus \{i_{l-1}\}, w, \mathcal{P'}_{-i_{l-1}}) \right)\\
&=&\sum_{l=1; l \neq s}^{|C|}\left( f_{i_l}(N'\setminus \{i_{l+1}\}, w, \mathcal{P'}_{-i_{l+1}})-f_{n'}(N'\setminus \{i_{l+1}\}, w, \mathcal{P'}_{-i_{l+1}}) \right)\\
&&-\sum_{l=1; l \neq s+1}^{|C|}\left( f_{i_l}(N'\setminus \{i_{l-1}\}, w, \mathcal{P'}_{-i_{l-1}})-f_{n'}(N'\setminus \{i_{l-1}\}, w, \mathcal{P'}_{-i_{l-1}}) \right).
\end{eqnarray*}
Since the right-hand side is uniquely determined by the solution $f$, by choosing $i,j \in C$ for $i_s=i, i_{s+1}=j$,  $\varphi_i(N, v, \mathcal{P})-\varphi_j(N, v, \mathcal{P})$ is uniquely determined.
By the equation (\ref{eq. eff coalition}), this observation implies that $\varphi_i(N,v,\mathcal{P})$ is uniquely determined for any $i \in C \in \mathcal{P}$ with $|C| \le n$, which completes the proof.
\end{proof}

\begin{proof}[Proof of Theorem \ref{ESS coalition}]
The if part is trivial.
We show the only if part.
Let  $\Phi$ be an efficient-RBCC extension operator that satisfies all the axioms.

Take any $f\in\mathcal F_{\mathcal C}$ and $(N,v,\mathcal P)\in\mathcal{CV}$, and let $n=|N|$.
By the same argument used for ($f$-FDS), efficiency and ($f$-FDSC) imply that, for every $C\in\mathcal P$,
\begin{equation}
\label{eq:eff-coalition-operator}
\sum_{i\in C}\Phi_i(f)(N,v,\mathcal P)
=
\sum_{i\in C}f_i(N,v,\mathcal P)
+
\frac{|C|}{n}
\left(
v(N)-\sum_{k\in N}f_k(N,v,\mathcal P)
\right).
\end{equation}
We show that the gain
\[
\Phi_i(f)(N,v,\mathcal P)-f_i(N,v,\mathcal P)
\]
is the same for all players in each $C\in\mathcal P$.

If $|C|=1$, there is nothing to prove.
Suppose that $|C|\geq2$, and take any distinct $i,j\in C$.
Let $n'\in\mathcal U\setminus N$, put $N'=N\cup\{n'\}$, and define the game $w\in\mathcal V^{N'}$ by
\[
w(S)=v(S\setminus\{n'\})
\]
for every $S\subseteq N'$.
Thus, $w|_N=v$, and $n'$ is a null player in $w$ and in
$w|_{N'\setminus\{l\}}$ for every $l\in N$.
Let
\[
\mathcal P'
=
(\mathcal P\setminus\{C\})\cup\{C\cup\{n'\}\}.
\]

For every $k\in N$, put
\[
x_k=f_k(N,v,\mathcal P),
\qquad
x_{n'}=0.
\]
Consider a solution $\hat f\in\mathcal F_{\mathcal C}$ satisfying the
following two conditions:
\begin{enumerate}
\item For every $k\in N$, 
\[
\hat f_k(N,\cdot,\mathcal P)
=
f_k(N,\cdot,\mathcal P)
\]

\item For every $l\in N'$, every
$\mathcal Q\in\Pi^{N'\setminus\{l\}}$, and every
$k\in N'\setminus\{l\}$,
\[
\hat f_k
\left(
N'\setminus\{l\},
w|_{N'\setminus\{l\}},
\mathcal Q
\right)
=
x_k.
\]
\end{enumerate}
The two conditions are consistent. 
Indeed, when $l=n'$ and $\mathcal Q=\mathcal P$, Condition 2 agrees with Condition 1 at the game $v$. 
The solution $\hat f$ may be defined arbitrarily at all remaining arguments.

We first observe that $\hat f$ satisfies (RBCC-$w$).
To see this, take any $\mathcal Q'\in\Pi^{N'}$, any block
$D\in\mathcal Q'$, and any cyclic ordering of the players in $D$.
The terms evaluated at $(N',w,\mathcal Q')$ cancel from the two sides of the (RBCC) identity. 
By Condition 2, every remaining term associated with player $k$ is equal to $x_k$, independently of which neighboring player is deleted.
Hence, both sides of the identity are equal to the same sum.

Since $\Phi$ is an efficient-RBCC extension operator, $\Phi(\hat f)$ also satisfies (RBCC-$w$). 
Therefore, the gain function
\[
h_k(M,u,\mathcal Q)
:=
\Phi_k(\hat f)(M,u,\mathcal Q)
-
\hat f_k(M,u,\mathcal Q)
\]
satisfies the corresponding (RBCC) identity at $w$.

Write $C=\{i_1,\ldots,i_{|C|}\}$ so that
$i_s=i$ and $i_{s+1}=j$, where the subscripts are interpreted cyclically.
For every $d\in C\cup\{n'\}$ and
$k\in(C\cup\{n'\})\setminus\{d\}$, write
\[
h_k^{-d}
:=
h_k
\left(
N'\setminus\{d\},
w|_{N'\setminus\{d\}},
\mathcal P'_{-d}
\right).
\]
In particular, $h_k^{-n'}=h_k(N,v,\mathcal P)$ for every $k\in C$.
Apply (RBCC) to the block $C\cup\{n'\}\in\mathcal P'$ using the ordering
\[
(i_1,\ldots,i_s,n',i_{s+1},\ldots,i_{|C|}).
\]
The resulting identity can be rewritten as
\begin{equation}
\label{eq:gain-cycle}
h_{i_{s+1}}^{-n'}-h_{i_s}^{-n'}=
\sum_{\substack{l=1\\l\neq s}}^{|C|}
\left(
h_{i_l}^{-i_{l+1}}-h_{n'}^{-i_{l+1}}
\right)-
\sum_{\substack{l=1\\l\neq s+1}}^{|C|}
\left(
h_{i_l}^{-i_{l-1}}-h_{n'}^{-i_{l-1}}
\right).
\end{equation}
In each term on the right-hand side of
\eqref{eq:gain-cycle}, player $n'$ remains a null player and belongs to
the same block as the relevant player $i_l$. Therefore,
($f$-EGN) implies that every difference in square brackets is equal
to zero. It follows that
\begin{equation}
\label{eq:equal-gains-within-coalition}
h_i(N,v,\mathcal P)=h_j(N,v,\mathcal P).
\end{equation}
Since $i$ and $j$ were arbitrarily chosen from $C$, the gains of
$\Phi(\hat f)$ relative to $\hat f$ are equal within $C$.

By Condition 1, for every $k\in N$,
\[
\hat f_k(N,\cdot,\mathcal P)
=
f_k(N,\cdot,\mathcal P)
\quad\text{and}\quad
\sum_{l\in N}\hat f_l(N,\cdot,\mathcal P)
=
\sum_{l\in N}f_l(N,\cdot,\mathcal P).
\]
Hence, (WEES) implies
\[
\Phi_k(\hat f)(N,\cdot,\mathcal P)
=
\Phi_k(f)(N,\cdot,\mathcal P).
\]
Combining this equality with
\eqref{eq:equal-gains-within-coalition}, we obtain
\[
\Phi_i(f)(N,v,\mathcal P)-f_i(N,v,\mathcal P)
=
\Phi_j(f)(N,v,\mathcal P)-f_j(N,v,\mathcal P)
\]
for every $i,j\in C$.

It now follows from \eqref{eq:eff-coalition-operator} that
\[
\Phi_i(f)(N,v,\mathcal P)
=
f_i(N,v,\mathcal P)
+
\frac{1}{n}
\left(
v(N)-\sum_{k\in N}f_k(N,v,\mathcal P)
\right)
\]
for every $i\in C$. Since $C\in\mathcal P$, $f$, and
$(N,v,\mathcal P)$ were arbitrarily chosen, $\Phi$ is the ESS operator,
which completes the proof.

\end{proof}

\section{Independence of axioms of Theorems \ref{efficient-fair main} and \ref{ESS coalition}}\label{appendix: independence}

\subsection{On Theorem \ref{efficient-fair main}}
\label{appendix: independence_network}

\subsubsection*{Independence of (WEES)}
\underline{The case $n\geq3$.}
Fix distinct players $1,2,3\in N$, a game
$\bar v\in\mathcal V^N$, and the network
\[
\bar g=\{12\}.
\]
Thus, the only non-singleton component of $\bar g$ is $\{1,2\}$.
For every $f\in\mathcal F_{\mathscr G}$, define
\[
q_{\bar v}(f)
:=
\begin{cases}
1,&\text{if $f$ does not satisfy (FA-$v$)}\\
0,&\text{otherwise},
\end{cases}
\]
and put
\[
t(f)
:=
q_{\bar v}(f)f_2(\bar v,\bar g),
\]
and define
\[
\Phi_i(f)(v,g)
=
ESS_i(f)(v,g)
+
\textbf{1}_{\{(v,g)=(\bar v,\bar g)\}}
\begin{cases}
t(f),&i=1,\\
-t(f),&i=2,\\
0,&i\notin\{1,2\}.
\end{cases}
\]

The additional term is a zero-sum transfer within the component
$\{1,2\}$. 
It therefore preserves efficiency and the total gain of every component. 
Hence, $\Phi(f)$ satisfies ($f$-FDS) for every $f$.

If $f$ satisfies (FA-$v$), then
$q_{\bar v}(f)=0$. 
For every $v\neq\bar v$, the additional term is also zero. 
Thus, whenever local fairness must be preserved, $\Phi(f)$ coincides with $ESS(f)$. 
Consequently, $\Phi$ is an efficient-fair extension operator.

It remains to show that (WEES) fails. 
Choose pairwise distinct constants $c_4,\ldots,c_n$ that are also distinct from
$0,\pm1,\pm2$. 
For every $v\in\mathcal V^N$, define
\[
f(v,\bar g)
=
(0,1,-1,c_4,\ldots,c_n)
\]
and
\[
f'(v,\bar g)
=
(0,2,-2,c_4,\ldots,c_n).
\]
When $n=3$, the coordinates $c_4,\ldots,c_n$ are omitted. 
At
$(\bar v,\varnothing)$, set
\[
f_1(\bar v,\varnothing)
=
f_2(\bar v,\varnothing)
=
f'_1(\bar v,\varnothing)
=
f'_2(\bar v,\varnothing)
=
0.
\]
The remaining values of $f$ and $f'$ may be defined arbitrarily.
Both solutions violate (FA-$v$), and hence
\[
q_{\bar v}(f)=q_{\bar v}(f')=1.
\]
For player $1$,
\[
f_1(\cdot,\bar g)=f'_1(\cdot,\bar g)=0
\]
and
\[
\sum_{k\in N}f_k(\cdot,\bar g)
=
\sum_{k\in N}f'_k(\cdot,\bar g)
=
\sum_{k=4}^n c_k.
\]
Thus, the premise of (WEES) is satisfied. However,
\[
t(f)=1
\quad\text{and}\quad
t(f')=2.
\]
The ESS terms for player $1$ coincide, whereas the additional terms do not. 
Therefore,
\[
\Phi_1(f)(\bar v,\bar g)
\neq
\Phi_1(f')(\bar v,\bar g),
\]
so (WEES) is violated.

\medskip
\underline{The case $n=2$.}
Let $N=\{1,2\}$, fix $\bar v\in\mathcal V^N$, and let
$\bar g=\{12\}$. 
Define
\[
q_{\bar v}(f)
:=
\begin{cases}
1,&\text{if $f$ does not satisfy (FA-$v$),}\\
0,&\text{otherwise,}
\end{cases}
\]
and
\[
d(f)
:=
f_1(\bar v,\bar g)-f_2(\bar v,\bar g).
\]
Consider the operator
\[
\begin{aligned}
\Phi_1(f)(v,g)
&=
ESS_1(f)(v,g)
+
1_{\{(v,g)=(\bar v,\bar g)\}}q_{\bar v}(f)d(f),\\
\Phi_2(f)(v,g)
&=
ESS_2(f)(v,g)
-
1_{\{(v,g)=(\bar v,\bar g)\}}q_{\bar v}(f)d(f).
\end{aligned}
\]

The additional terms form a zero-sum transfer, so efficiency is preserved.
If $f$ satisfies (FA-$v$), then
$q_{\bar v}(f)=0$, while at every $v\neq\bar v$ the additional terms are zero.
Thus, the operator preserves local fairness. Moreover, ($f$-FDS) is vacuous at the connected network $\bar g$, and the operator coincides with $ESS$ at the empty network. Hence, the operator satisfies all the remaining requirements.

To see that (WEES) fails, choose $f$ and $f'$ such that
\[
f(\cdot,\bar g)
=
f'(\cdot,\bar g)
=
(1,0).
\]
At the empty network, set
\[
f(\bar v,\varnothing)=(0,-1)
\quad\text{and}\quad
f'(\bar v,\varnothing)=(0,0).
\]
The solution $f$ satisfies $(FA\text{-}\bar v)$, because deleting
the link $12$ changes both players' payoffs by $1$. 
In contrast, $f'$ violates $(FA\text{-}\bar v)$. Therefore,
\[
q_{\bar v}(f)=0,
\qquad
q_{\bar v}(f')=1,
\qquad
d(f)=d(f')=1.
\]
The two benchmarks have the same player-$1$ function and the same total
at $\bar g$, but
\[
\Phi_1(f)(\bar v,\bar g)
\neq
\Phi_1(f')(\bar v,\bar g).
\]
Thus, (WEES) is violated also when $n=2$.

\subsubsection*{Independence of ($f$-FDS)}
Suppose that $n\geq2$, and define the equal-division operator by
\[
\Phi_i^{ED}(f)(v,g)
:=
\frac{v(N)}{n}
\]
for every $f\in\mathcal F_{\mathscr G}$, every
$(v,g)\in\mathcal V^N\times\mathscr G^N$, and every $i\in N$.

The output is efficient because
\[
\sum_{i\in N}\Phi_i^{ED}(f)(v,g)=v(N).
\]
Moreover, $\Phi^{ED}(f)$ is independent of the network and therefore satisfies (FA-$v$) for every $v$, regardless of whether the benchmark solution $f$ does. 
Hence, $\Phi^{ED}$ is an efficient-fair extension operator. 
Since its output does not depend on $f$, it also satisfies (WEES).

To see that ($f$-FDS) fails, let $g^0$ be the empty network, choose distinct players $1,2\in N$, and take a benchmark $f$ and a game $v$ such that
\[
f_1(v,g^0)=1
\quad\text{and}\quad
f_2(v,g^0)=0.
\]
Since $\{1\}$ and $\{2\}$ are components of $g^0$, their average gains relative to $f$ are
\[
\Phi_1^{ED}(f)(v,g^0)-f_1(v,g^0)
=
\frac{v(N)}{n}-1
\]
and
\[
\Phi_2^{ED}(f)(v,g^0)-f_2(v,g^0)
=
\frac{v(N)}{n},
\]
respectively. 
These gains are different. Thus, $\Phi^{ED}(f)$ does not satisfy ($f$-FDS).

\subsection{On Theorem \ref{ESS coalition}}
\label{appendix: independence_coalition}
For notational convenience, let $F_f(N,v,\mathcal P):=\sum_{k\in N}f_k(N,v,\mathcal P)$.

\subsubsection*{Independence of (WEES)}

Identify three players $1<2<3$ in $\mathcal U$, and let $N_0=\{1,2,3\}$ and  $\mathcal{P}_0=\bigl\{\{1,2\},\{3\}\bigr\}$.
Let $\bar v$ be a game on $N_0$ in which neither player $1$ nor player $2$ is null; for example,
\[
\bar v(S)=|S\cap\{1,2\}|.
\]

Define $f^\ast\in\mathcal F_{\mathcal C}$ as follows. 
For every finite nonempty $M\subset\mathcal U$, every game $w$ on $M$, and the grand-coalition partition $\{M\}$, let
\[
f_i^\ast(M,w,\{M\})
=
\begin{cases}
1,&i=\min M,\\
0,&i\neq\min M.
\end{cases}
\]
At $(N_0,\cdot,\mathcal P_0)$, set
\[
f^\ast(N_0,\cdot,\mathcal P_0)=(0,1,-1),
\]
and set $f^\ast=0$ at all remaining coalition structures.

Now define
\[
\Phi_i^\ast(f)(N,v,\mathcal P)
:=
ESS_i(f)(N,v,\mathcal P)
+
H_i^\ast(f)(N,v,\mathcal P),
\]
where
\[
H^\ast(f)(N,v,\mathcal P)
:=
\begin{cases}
(1,-1,0),
&\text{if $f=f^\ast$ and
$(N,v,\mathcal P)=(N_0,\bar v,\mathcal P_0)$,}\\
0,&\text{otherwise.}
\end{cases}
\]

First, $f^\ast$ violates (RBCC-$w$) for every finite $M\subset\mathcal U$ with $|M|\geq3$ and every game $w$ on $M$.
Indeed, write $M=\{i_1, \ldots, i_m\}$ and consider the grand-coalition partition with this cyclic ordering $i_1<\cdots<i_m$. 
After the current-game terms cancel, (RBCC) would require
\[
\begin{aligned}
&\sum_{l=1}^{m}
f_{i_l}^\ast
\bigl(
M\setminus\{i_{l-1}\},
w,
\{M\setminus\{i_{l-1}\}\}
\bigr)\\
={}&
\sum_{l=1}^{m}
f_{i_l}^\ast
\bigl(
M\setminus\{i_{l+1}\},
w,
\{M\setminus\{i_{l+1}\}\}
\bigr).
\end{aligned}
\]
The left-hand side is $2$: its nonzero terms correspond to $i_1$ after deleting $i_m$ and $i_2$ after deleting $i_1$. 
The right-hand side is $1$: its only nonzero term corresponds to $i_1$ after deleting $i_2$. Thus, the required identity fails.

The component $H^\ast$ is a zero-sum transfer within the block $\{1,2\}$. 
Hence, it preserves efficiency and ($f$-FDSC) holds. 
At the exceptional point, neither player $1$ nor player $2$ is null in $\bar v$, while the other block is a singleton. 
Therefore, ($f$-EGN) imposes no restriction there. At every other point, $\Phi^\ast$ coincides with $ESS$.

If $f\neq f^\ast$, the operator is the ESS operator and preserves every local (RBCC) property of $f$. 
If $f=f^\ast$ and $|N|\geq3$, the preceding calculation shows that $f^\ast$ never satisfies (RBCC-$v$), so the local-preservation implication is vacuous.
If $|N|\leq2$, the $H$ is zero. 
Thus, $\Phi^\ast$ is an efficient-RBCC extension operator satisfying ($f$-EGN) and ($f$-FDSC).

Finally, choose $f'\neq f^\ast$ such that
\[
f'(N_0,\cdot,\mathcal P_0)
=
f^\ast(N_0,\cdot,\mathcal P_0),
\]
and modify $f'$ only at an unrelated player set or coalition structure.
For player $1$,
\[
f_1^\ast(N_0,\cdot,\mathcal P_0)
=
f'_1(N_0,\cdot,\mathcal P_0)
\]
and
\[
\sum_{k\in N_0}f_k^\ast(N_0,\cdot,\mathcal P_0)
=
\sum_{k\in N_0}f'_k(N_0,\cdot,\mathcal P_0)
=
0.
\]
Thus, the premise of (WEES) is satisfied.
 However,
\[
\Phi_1^\ast(f^\ast)(N_0,\bar v,\mathcal P_0)
=
ESS_1(f^\ast)(N_0,\bar v,\mathcal P_0)+1,
\]
whereas
\[
\Phi_1^\ast(f')(N_0,\bar v,\mathcal P_0)
=
ESS_1(f')(N_0,\bar v,\mathcal P_0).
\]
The two ESS terms coincide, so (WEES) is violated.

\subsubsection*{Independence of ($f$-EGN)}

Fix $\lambda\in\mathbb{R}\setminus\{0\}$, and define
\[
H_i^G(f)(N,v,\mathcal P)
:=
\begin{cases}
\displaystyle
\lambda
\left(
f_i(N,v,\{N\})
-
\frac{F_f(N,v,\{N\})}{|N|}
\right),
&\mathcal P=\{N\},\\[3mm]
0,&\mathcal P\neq\{N\}.
\end{cases}
\]
Consider the operator
\[
\Phi_i^G(f)(N,v,\mathcal P)
:=
ESS_i(f)(N,v,\mathcal P)
+
H_i^G(f)(N,v,\mathcal P).
\]

Since $\sum_{i \in N}H^G_i(f)(N,v, \mathcal{P})=0$, efficiency is preserved. 
It is nonzero only at the grand-coalition partition, where ($f$-FDSC) is vacuous. 
Moreover, the correction for player $i$ depends only on $f_i$ and the total $F_f$, so (WEES) holds.

We next verify local (RBCC) preservation.
If $|N|=1$, (RBCC) is vacuous.
Hence, suppose that $|N|\geq2$.
Suppose that $f$ satisfies (RBCC-$v$). 
If $\mathcal P\neq\{N\}$, every non-vacuous (RBCC) identity involves only non-grand-coalition partitions, and therefore all the $H^G$-terms are zero.
If $\mathcal P=\{N\}$, enumerate $N=\{i_1,\ldots,i_n\}$ cyclically, with $i_0=i_n$ and $i_{n+1}=i_1$. 
The $H^G$-terms evaluated at $N$ cancel from the two sides of the $(RBCC)$ identity. 
The terms involving $f$ satisfy the required equality because $f$ satisfies (RBCC-$v$), and the remaining average terms agree because
\[
\sum_{l=1}^n
\frac{
F_f(N\setminus\{i_{l-1}\},v,\{N\setminus\{i_{l-1}\}\})
}{n-1}
=
\sum_{l=1}^n
\frac{
F_f(N\setminus\{i_{l+1}\},v,\{N\setminus\{i_{l+1}\}\})
}{n-1}.
\]
Indeed, the two sides sum over the same set of deleted players. 
Hence, $\Phi^G$ is an efficient-RBCC extension operator.

To see that ($f$-EGN) fails, take $N=\{1,2\}$, $\mathcal P=\{N\}$, and the zero game $v$.
Both players are null in $v$. 
Choose a benchmark $f$ such that
\[
f_1(N,v,\mathcal P)=1
\quad\text{and}\quad
f_2(N,v,\mathcal P)=0.
\]
Since the ESS correction is common to the two players,
\[
ESS_1(f)(N,v,\mathcal P)-ESS_2(f)(N,v,\mathcal P)=1.
\]
On the other hand,
\[
H_1^G(f)(N,v,\mathcal P)-H_2^G(f)(N,v,\mathcal P)=\lambda.
\]
Consequently,
\[
\begin{aligned}
\Phi_1^G(f)(N,v,\mathcal P)-\Phi_2^G(f)(N,v,\mathcal P)
&=
1+\lambda\\
&\neq
1\\
&=
f_1(N,v,\mathcal P)-f_2(N,v,\mathcal P).
\end{aligned}
\]
Thus, ($f$-EGN) is violated.

\subsubsection*{Independence of ($f$-FDSC)}

Let $\mathcal P^0(N):=\bigl\{\{i\}:i\in N\bigr\}$ be the discrete partition of $N$. 
Fix $\lambda\in\mathbb{R}\setminus\{0\}$, and define
\[
H_i^D(f)(N,v,\mathcal P)
:=
\begin{cases}
\displaystyle
\lambda
\left(
f_i(N,v,\mathcal P^0(N))
-
\frac{F_f(N,v,\mathcal P^0(N))}{|N|}
\right),
&\mathcal P=\mathcal P^0(N),\\[3mm]
0,&\mathcal P\neq\mathcal P^0(N).
\end{cases}
\]
Consider the operator
\[
\Phi_i^D(f)(N,v,\mathcal P)
:=
ESS_i(f)(N,v,\mathcal P)
+
H_i^D(f)(N,v,\mathcal P).
\]

Since $\sum_{i \in N}H^D_i(f)(N,v, \mathcal{P})=0$, efficiency is preserved. 
Its player-$i$ component depends only on $f_i$ and the total $F_f$, so (WEES) holds.

At the discrete partition, every component is a singleton, and hence ($f$-EGN) imposes no restriction involving two distinct
players. 
At every non-discrete partition, the correction is zero.
Therefore, ($f$-EGN) holds.

For (RBCC), the correction is zero at every partition having a non-singleton component. 
If a component has at least three players, deleting one player leaves a non-singleton component, so all correction terms in its (RBCC) identity are zero. 
For a two-player component, the predecessor and successor of each player coincide, and the (RBCC) identity is tautological. 
At the discrete partition, all components are singletons. 
Thus, $H^D(f)$ satisfies (RBCC) for every $f$, and $\Phi^D$ is an efficient-RBCC extension operator.

To see that ($f$-FDSC) fails, take $N=\{1,2\}$, $\mathcal P=\mathcal P^0(N)$, and a benchmark $f$ such that, at some game $v$,
\[
f_1(N,v,\mathcal P)=1
\quad\text{and}\quad
f_2(N,v,\mathcal P)=0.
\]
Then
\[
H_1^D(f)(N,v,\mathcal P)=\frac{\lambda}{2}
\quad\text{and}\quad
H_2^D(f)(N,v,\mathcal P)=-\frac{\lambda}{2}.
\]
The ESS part gives both singleton components the same gain relative to $f$, whereas the $H^D$-terms are different. 
Hence,
\[
\Phi_1^D(f)(N,v,\mathcal P)-f_1(N,v,\mathcal P)
\neq
\Phi_2^D(f)(N,v,\mathcal P)-f_2(N,v,\mathcal P),
\]
which violates ($f$-FDSC).


\bibliography{bib_f-ESS}

\end{document}